\documentclass[12pt]{article}
\usepackage{txfonts}
\usepackage{latexsym}
\usepackage{stmaryrd}
\hyphenation{mono-tony Mono-tony mono-tonic mono-tonicity Mono-tonicity}

\newtheorem{theorem}{Theorem}
\newtheorem{corollary}{Corollary}
\newtheorem{lemma}{Lemma}

\newtheorem{definition}{Definition}

\newcommand{\blackslug}{\mbox{\hskip 1pt \vrule width 4pt height 8pt 
depth 1.5pt \hskip 1pt}}
\newcommand{\QED}{\quad\blackslug\lower 8.5pt\null\par\noindent}


\newcommand{\eqdef}{\stackrel{\rm def}{=}}

\newcommand{\mysim}{\sim \!}
\newcommand{\dna}{\bindnasrepma}

\title{Non-associative and projective linear logics}
\author{Daniel Lehmann\\
School of Computer Science and Engineering, \\Hebrew University, \\Jerusalem 91904, Israel
\\lehmann@cs.huji.ac.il
}
\date{January 2022}

\begin{document}
\maketitle
\begin{abstract}
A non-commutative, non-associative weakening of Girard's~\cite{Girard:87} linear logic 
is developed for multiplicative and additive connectives. 
Additional assumptions capture the logic of quantic measurements.
\end{abstract}

\section{Introduction} \label{sec:intro}
The novelty of Quantum Mechanics seemed to require a new Logic and such a Quantum Logic
was proposed in~\cite{BirkvonNeu:36} and gave rise to sustained activity on non distributive
lattices. The starting point of this approach is that the atomic propositions of Quantum Logic 
denote (closed) subspaces of a Hilbert space and that the connectives are interpreted as
{\em orthogonal complement}, {\em intersection} and {\em closure of the union}.

In~\cite{Girard:87} Jean-Yves Girard proposed Linear Logic as the logic that could express many
logics and stressed a parallel between some of the features of Linear Logic and some properties 
of quantum systems, expressing his hope that Linear Logic could be more successful in 
explaining the oddities of Quantum Physics. It contains more connectives than the three
connectives just mentioned.
To go {\em quantic} linear logic must go {\em non-commutative} since quantic measurements
are represented by self-adjoint operators that do not always commute.
A non-commutative, but still associative, version of Linear Logic has been developed, 
surprisingly easily, in~\cite{Girard:McGill, Yetter:LL}, but no direct connection to the logics
of quantum measurements has been put in evidence.

This paper claims that the three operations considered by~\cite{BirkvonNeu:36} need to
be complemented by a fourth operation between subspaces already studied 
in~\cite{Lehmann_andthen:JLC}. This operation expresses the temporal composition
of measurements. It is not associative.
This paper develops a non-associative linear logic capable of expressing this operation 
as a connective.

Section~\ref{sec:baby} presents the motivation for the paper: a very limited logic for
describing the possible states of a quantum system after a sequence of measurements
based on a non-associative operation.
Sections~\ref{sec:Q-structures} to~\ref{sec:completeness} present non-associative phase
semantics for the multiplicative and the additive linear connectives, and a sound and complete
set of proof rules. The rules are the rules presented in~\cite{Girard:87} except for the
exchange rule that is replaced by two limited exchange rules.
Section~\ref{sec:projective} presents a restriction on phase semantics, projective structures, 
that validates a limited Weakening rule and seems to fit quantum reasoning.
A sound and complete set of rules for multiplicative and additive connectives 
in projective structures is presented.

\section{A baby quantum logic} \label{sec:baby}
We shall develop a very simple set of quantic propositions and propose that
the connective {\em times} of linear logic express the temporal succession of measurements.
This has a double purpose.
On one hand, it introduces linear logic to those readers interested in the quantum world and
is intended to show them the power of the language of linear logic, before proceeding to its
formal mathematical presentation and, on the other hand,
it is intended to put in evidence before the logicians, both the need for generalizing linear logic 
and the specific properties of quantum logic.

We want to talk about the state of a quantic system, say that certain propositions hold in 
the system, that other propositions do not hold and describe what follows from what.
One gets information about a physical system by performing measurements on it.
The simplest such piece of information is of the type: {\em I measured a certain variable
and I found it has value $x$}.
In Quantum Physics a variable is a self-adjoint operator in some Hilbert space $\cal H$,
a value is an eigenvalue of the operator and finding value $x$ means: {\em the state of 
the system is in the eigen subspace corresponding to the eigenvalue $x$}.
In this first effort we shall assume that any subspace of $\cal H$ can be the eigen subspace 
of some operator.
To justify this assumption we shall assume that the Hilbert space is finite-dimensional
(otherwise we should probably consider only closed subspaces) and we shall assume that no
superselection rules have to be considered.

We shall identify basic propositions with subspaces of $\cal H$. 
Let $\cal P$ be the set of all subspaces of $\cal H$.

The operation ``.'' allows us to describe sequences of propositions, i.e., sequences
of measurements. 
Let \mbox{$A , B \in \cal P$} be subspaces of $\cal H$.
The subspace \mbox{$A . B$} contains the projections on $B$ of the elements of $A$, 
in other terms \mbox{$A . B$} is the projection of $A$ on $B$.
The subspace \mbox{$A . B$} subsumes the proposition:
the system has been measured in subspace $A$ and then measured in subspace $B$.
Note that the operation ``.'', already studied in~\cite{Lehmann_andthen:JLC}, is
neither associative nor commutative.
The space $\cal H$ is a neutral element for ``.'': for any subspace $A$,
\mbox{$ A . {\cal H} =$} \mbox{$ {\cal H} . A = A$}.
The zero-dimensional subspace \mbox{$\{ \vec{0} \}$} is a zero element for ``.'':
\mbox{$ A . \{ \vec{0} \} =$} \mbox{$ \{ \vec{0} \} . A =$} \mbox{$ \{ \vec{0} \} $}.

Another operation is available on subspaces: to any subspace $A$ corresponds its orthogonal
complement $A^\bot$ and, for any subspace $A$, \mbox{$A = {A^\bot}^\bot$}.
The calculus involving ``.'' and $\bot$ has beautiful properties that will be developed in a 
generalization of Girard's linear logic starting in Section~\ref{sec:Q-structures}.


We shall now prove two properties of the operation ``.'' that are crucial for the generalization
of linear logic to be presented.
\begin{lemma} \label{the:baby}
For any subspaces \mbox{$A , B , C $}:
\begin{enumerate}
\item
\mbox{$ A . B = \{ \vec{0} \} $} iff \mbox{$ B . A = \{ \vec{0} \} $},
\item
\mbox{$ ( A . B ) . C = \{ \vec{0} \} $} iff \mbox{$ A . ( C . B ) = \{ \vec{0} \} $}.
\end{enumerate}
\end{lemma}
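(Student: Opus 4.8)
The plan is to reduce both parts to a single symmetric reformulation of the condition $X . Y = \{ \vec{0} \}$. Throughout, write $\pi_{Y}$ for the orthogonal projection of ${\cal H}$ onto the subspace $Y$, so that by definition $X . Y = \pi_{Y}(X) = \{ \pi_{Y}(x) : x \in X \}$, and recall that $\pi_{Y}$ is self-adjoint with kernel $Y^{\bot}$.

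First I would record the elementary fact that, for any subspaces $X$ and $Y$,
$$ X . Y = \{ \vec{0} \} \iff X \subseteq Y^{\bot} \iff Y \subseteq X^{\bot} \iff \langle x , y \rangle = 0 \ \mbox{for all} \ x \in X,\ y \in Y . $$
Indeed $\pi_{Y}(x) = \vec{0}$ exactly when $x \in \ker \pi_{Y} = Y^{\bot}$, so $\pi_{Y}(X) = \{ \vec{0} \}$ iff $X \subseteq Y^{\bot}$; and the three conditions on the right all assert that every vector of $X$ is orthogonal to every vector of $Y$. This symmetric characterisation is precisely item~1.

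For item~2 I would apply the characterisation twice, with self-adjointness of $\pi_{B}$ in between. By the characterisation, $(A . B) . C = \{ \vec{0} \}$ holds iff the subspaces $A . B = \pi_{B}(A)$ and $C$ are orthogonal, i.e.\ iff $\langle \pi_{B}(a) , c \rangle = 0$ for all $a \in A$ and $c \in C$. Self-adjointness gives $\langle \pi_{B}(a) , c \rangle = \langle a , \pi_{B}(c) \rangle$, so the condition is equivalent to $\langle a , \pi_{B}(c) \rangle = 0$ for all $a \in A$, $c \in C$; since $C . B = \pi_{B}(C) = \{ \pi_{B}(c) : c \in C \}$, this says exactly that $A$ and $C . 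B$ are orthogonal, which by the characterisation again is equivalent to $A . (C . B) = \{ \vec{0} \}$.

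There is no serious obstacle here: the whole argument is just the reformulation of ``$X . Y = \{ \vec{0} \}$'' as the symmetric orthogonality relation $X \perp Y$, together with pushing the self-adjoint projection $\pi_{B}$ across an inner product. The only point needing a word of care is that a vector is orthogonal to the subspace $\pi_{B}(C)$ iff it is orthogonal to each $\pi_{B}(c)$, which is immediate since those vectors already exhaust $\pi_{B}(C)$. (Finite-dimensionality of ${\cal H}$ plays no role in this lemma, provided $\pi_{Y}$ is understood as projection onto the closure of $Y$; it is the later development that wants every subspace to be an eigenspace.)
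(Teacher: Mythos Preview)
Your argument is correct. For item~1 you and the paper do the same thing: both reduce \mbox{$X . Y = \{ \vec{0} \}$} to the symmetric statement that $X$ and $Y$ are orthogonal.

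For item~2 the approaches genuinely diverge. You exploit self-adjointness of $\pi_{B}$ to slide the projection across the inner product, turning \mbox{$\langle \pi_{B}(a), c \rangle = 0$} directly into \mbox{$\langle a, \pi_{B}(c) \rangle = 0$}; this handles both directions of the equivalence in a single stroke and is entirely algebraic. The paper instead proves a preliminary geometric remark --- for \mbox{$x \in A$}, $x$ is orthogonal to $B$ iff $x$ is orthogonal to $B . A$ --- and then argues each direction separately by chasing nonzero elements through successive projections, partly by contradiction. Your route is shorter and more transparent; the paper's route stays closer to the concrete picture of iterated projections and makes visible the intermediate fact about $B . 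A$, which has some independent interest but is not needed for the lemma as stated.
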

\begin{proof}
\begin{enumerate}
\item
\mbox{$ A . B = \{ \vec{0} \} $} iff $A$ is orthogonal to $B$ and the orthogonality relation is
symmetric.
\item
Notice, first, that if \mbox{$ x \in A $}, then $x$ is orthogonal to $B$ iff it is orthogonal to
\mbox{$ B . A $}.
Indeed, for any \mbox{$ y \in B $}, one has \mbox{$ y = z + w $} where $z$ is the projection
of $y$ on $A$ and $w$ is the projection of $y$ on $A^\bot$.
Therefore $x$ is orthogonal to $y$ iff $x$ is orthogonal to $z$.

Let \mbox{$ ( A . B ) . C = \{ \vec{0} \} $}, i.e. $A . B$ is orthogonal to $C$.
Assume \mbox{$ x \neq \vec{0} $}, \mbox{$ x \in A . ( C . B ) $}.
Clearly
\begin{itemize}
\item 
$x$ is not orthogonal to $A$ and
\item
\mbox{$ x \in C . B $} and therefore \mbox{$x \in B $} and $x$ is not orthogonal 
to $C$.
\end{itemize}
By the remark above, \mbox{$x \in A . B$} and $x$ is not orthogonal to $C$.
A contradiction.

Suppose now that \mbox{$ x \in A . ( B . C ) $}, \mbox{$x \neq \vec{0} $}.
There are non-null vectors $a$ and $b$ such that \mbox{$a \in A$}, $b$ is the projection
of $a$ onto $B$ and $x$ is the projection of $b$ onto $C$.
By the remark above, since \mbox{$ b \in B$} and $b$ is not orthogonal to $C$, 
$b$ is not orthogonal to $C . B$.
Therefore the projection, say $y$, of $b$ on $C . B$ is not null. 
But $y$ is the projection on $C . B$ of the projection of $a$ onto $B$, and therefore
$y$ is the projection of $a$ onto $C . B$. 
We have shown that \mbox{$A . ( C . B ) $} is not empty.
\end{enumerate}
\QED \end{proof}

\section{Q-structures} \label{sec:Q-structures}
We shall now define structures into which our propositions will be interpreted.
Such structures are a generalization of the phase spaces of Girard's~\cite{Girard:87}.
We shall define multiplicative and additive connectives on such structures, but no exponentials.
We shall provide sound and complete axiomatization for the logic of such structures.

\begin{definition} \label{def:Q-structure}
A Q-structure is a 4-tuple 
\mbox{$ \langle {\cal P}, {\cal Z}, . \ , 1 \rangle $} 
such that
\begin{enumerate}
\item 
$\cal P$ is a set,
\item \mbox{$ {\cal Z} \subseteq {\cal P} $} is a subset of $\cal P$, the {\em garbage} set, 
\item \label{conditions} 
``.'' is a binary operation on $\cal P$ that satisfies, for any \mbox{$x , y , z \in \cal P$},
\begin{enumerate}
\item \label{symmetry}
\mbox{$ x . y \in \cal Z $} iff \mbox{$ y . x \in \cal Z $},
\item \label{reverse}
\mbox{$( x . y ) . z \in \cal Z $} iff \mbox{$x . ( z . y ) \in \cal Z $},
\end{enumerate}
\item
\mbox{$ 1 \in \cal P $} is a neutral element for ``.'', i.e. \mbox{$1 . x =$}
\mbox{$ x . 1 = x $} 
for any \mbox{$x \in \cal P $}.
\end{enumerate}
\end{definition}

Note that 
\begin{itemize}
\item
the operation ``." is not assumed to be associative or commutative,
\item 
condition~\ref{symmetry} is the suitable weakening of the assumption that the 
operation ``.'' is commutative proposed in~\cite{Girard:McGill, Yetter:LL},
\item 
condition~\ref{reverse} is automatically satisfied if the operation ``.'' is both commutative
and associative, but not if it is only associative,
\item
the reason condition~\ref{reverse} has been preferred to the condition 
\mbox{$( x . y ) . z \in \cal Z $} iff \mbox{$x . ( y . z ) \in \cal Z $} is purely circumstantial:
we prefer to consider the information gathered by measurements on the final state and not
the information gathered on the initial state. A structure satisfying this latter condition
instead of condition~\ref{reverse} satisfies mirror images of the properties of Q-structures,
\item
in the presence of~\ref{symmetry} condition~\ref{reverse} is equivalent to:
\mbox{$ ( x . y ) . z \in \cal Z $} iff \mbox{$ ( z . y ) . x \in \cal Z $},
\item
we  use $\cal Z$ where Girard uses $\bot$ because the latter is already heavily overloaded,
\end{itemize}

The garbage set $\cal Z$ and the orthogonality relation, so basic to quantum physics, 
define each other.
\begin{definition} \label{def:orth}
The orhogonality relation on $\cal P$, denoted $\bot$ is defined by:
\mbox{$x \bot y $} iff \mbox{$ x . y \in \cal Z$}.
Then \mbox{$ {\cal Z} = \{ x . y \mid x \bot y \} $}.
\end{definition}

The following is obvious.
\begin{lemma} \label{the:orth}
Condition~\ref{symmetry} is equivalent to the requirement that the relation $\bot$ be
{\em symmetric}.
\end{lemma}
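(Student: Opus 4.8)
The plan is to prove this purely by unfolding Definition~\ref{def:orth}. By that definition, for any \mbox{$x , y \in {\cal P}$} the statement \mbox{$x \bot y$} is \emph{by definition} the statement \mbox{$x . y \in {\cal Z}$}, and likewise \mbox{$y \bot x$} abbreviates \mbox{$y . x \in {\cal Z}$}. Symmetry of the relation $\bot$ is the assertion that, for all \mbox{$x , y$}, \mbox{$x \bot y$} holds iff \mbox{$y \bot x$} holds; substituting the two abbreviations just mentioned, this is literally the assertion that \mbox{$x . y \in {\cal Z}$} iff \mbox{$y . x \in {\cal Z}$}, which is condition~\ref{symmetry}. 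So the two conditions are not merely equivalent but the same statement read through the definition, and the implications go through in both directions without any side hypotheses.

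Concretely I would carry this out in one step: fix arbitrary \mbox{$x , y \in {\cal P}$}, write the biconditional chain \mbox{$x \bot y \iff x . y \in {\cal Z} \iff y . x \in {\cal Z} \iff y \bot x$}, where the outer two equivalences are Definition~\ref{def:orth} and the middle one is condition~\ref{symmetry}; then note that dropping the middle use of condition~\ref{symmetry} and instead assuming the outer chain gives condition~\ref{symmetry} back. I do not anticipate any obstacle here — the only thing to be careful about is to state that the quantification over \mbox{$x$} and \mbox{$y$} is the same on both sides, so that no genuine strengthening or weakening is hidden in the rephrasing. This is exactly why the paper flags it as ``obvious''.
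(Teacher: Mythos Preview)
Your proposal is correct and is exactly the unfolding the paper has in mind: by Definition~\ref{def:orth}, the relation $\bot$ is symmetric precisely when \mbox{$x . y \in {\cal Z}$} iff \mbox{$y . x \in {\cal Z}$}, which is condition~\ref{symmetry}. The paper gives no further argument beyond declaring this obvious, so your write-up is already more detailed than the original.
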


The set $\cal P$ should be understood as the set of all imaginable, possible and imposible, 
situations, possible and impossible worlds.
The set $\cal Z$ is the set of all impossible, contradictory situations.
The operation ``.'' composes two situations.
Two situations are orthogonal iff their composition is impossible.
Condition~\ref{symmetry} requires that orthogonality be symmetric.
The interpretation of Condition~\ref{reverse} is less obvious.

We noticed that any phase space, as defined in~\cite{Girard:87}, is a Q-structure.
Let us describe two more examples of Q-structures.

Our first example is a presentation of the phase semantics of classical propositional logic,
which can throw light on the differences between classical and quantum logic.
Let $V$ be a set of propositional variables, \mbox{$ M = 2^{V}$} the set of models for $V$
and let \mbox{$ {\cal P } =$} \mbox{$ M \cup \{ 0 , 1 \} $}.
Define ``.'' by: for any \mbox{$x , y \in M $} \mbox{$ x . x = x $} and, if \mbox{$x \neq y $},
\mbox{$ x . y = y . x = 0 $}, $1$ is a neutral element and $0$ is a zero
for ``.'': \mbox{$0 . x =$} \mbox{$ x . 0 = 0 $}.
The ``.'' operation is both commutative and associative.

Our second example is a presentation of the baby quantum logic of Section~\ref{sec:baby}
and is to be compared to the previous example.
Given a Hilbert space $\cal H$, the set
$\cal P$ includes all one-dimensional subspaces of $\cal H$, its zero-dimensional subspace 
$\{ \vec{0} \}$ and the space $\cal H$ itself.
Note that not all subspaces of $\cal H$ are elements of $\cal P$.
The set $\cal Z$ is the singleton that contains the zero-dimensional subspace. 
The element $1$ is $\cal H$.
The operation ``.'' is defined by: for any one-dimensional subspaces $x , y$,
\mbox{$ x . y = y $} if $x$ and $y$ are {\em not} orthogonal and
\mbox{$ x . y = \vec{0} $} if \mbox{$ x \bot y $}.
The set $\cal H$ is a neutral element and the $\{ \vec{0} \} $ is a zero element for ``.''.
Lemma~\ref{the:baby} shows that items~\ref{symmetry} and~\ref{reverse} 
of Definition~\ref{def:Q-structure} hold.

Here is a summary of this paper's claims.
\begin{enumerate}
\item 
Q-structures provide the natural extension of Linear Logic to the non-associative,
non-commutative case. Their logic exhibits most of the beautiful symmetries of Linear Logic.
\item 
Associative structures are not fit for Quantum Logic since the basic operation of 
quantum logic is not associative as already noticed in~\cite{Lehmann_andthen:JLC}.
\item
Q-structures in which the garbage set satisfies an additional property are a suitable 
framework for quantum logics as will be shown in Section~\ref{sec:projective}.
\end{enumerate}

\section{Facts} \label{sec:facts}
If the elements of $\cal P$ are the possible situations, the subsets of $\cal P$ represent
the possible states of information about the situation.
In quantum logic, not all subsets of $\cal P$ represent bona fide information states.
For example, an information state that contains \mbox{$x \in \cal P$} and also
\mbox{$y \in \cal P$} must, at least in the absence of a superselection rule, contain all
linear combinations of $x$ and $y$.
This requirement can be formalized in terms of the orthogonality relation.
Girard calls the subsets that represent information states {\em facts} and we shall stick
with his terminology.

\begin{definition} \label{def:orthogonal}
Let \mbox{$A \subseteq \cal P $}.
\begin{equation} 
A^\bot = \{ b \in {\cal P} \mid b \bot a , \ \forall a \in A \}.
\end{equation}
The set $A$ is a {\em fact} iff \mbox{$A = {A^\bot}^\bot$}.
\end{definition}

In the presentation of classical propositional logic of Section~\ref{sec:Q-structures}:
$\cal P$ is a fact and it is the only fact that contains $1$, a subset of $\cal Q$ that does
not contain $1$ is a fact iff it contains $0$.

In baby quantum logic, $\cal P$ is a fact and it is the only fact that contains $1$,
a subset of $\cal Q$ that does not contain $1$ is a fact iff it contains $0$ 
and all the one-dimensional subspaces of a certain subspace.
The facts are in one-to-one correspondence with the subspaces of $\cal H$, as expected.

The following lemma is proved as in~\cite{Girard:87}.
The use of commutativity is replaced by that of Condition~\ref{symmetry} of 
Definition~\ref{def:Q-structure}.
\begin{lemma} \label{the:perp}
For any \mbox{$A , B \subseteq \cal P $}, 
\begin{enumerate}
\item
\mbox{$ A \subseteq {A^\bot}^\bot $},
\item
if \mbox{$B \subseteq A \subseteq \cal P$}, then 
\mbox{${A}^{\bot} \subseteq {B}^{\bot}$},
\item
\mbox{$A^\bot =$} \mbox{$ {{A^\bot}^\bot}^\bot$}.
\end{enumerate}
\end{lemma}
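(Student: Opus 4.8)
The plan is to prove the three parts of Lemma~\ref{the:perp} in the order given, since each relies on the one before, and to lean on nothing beyond the definition of $(\cdot)^\bot$ (Definition~\ref{def:orthogonal}), the definition of $\bot$ on elements (Definition~\ref{def:orth}), and the symmetry of $\bot$ (Lemma~\ref{the:orth}, equivalently Condition~\ref{symmetry} of Definition~\ref{def:Q-structure}). The arguments are exactly those of Girard~\cite{Girard:87}; the only point requiring care is that the element-level orthogonality relation is not a priori symmetric, so every time one would silently swap the two sides of ``$a \bot b$'' one must instead invoke Lemma~\ref{the:orth}.

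For part~1, I would take an arbitrary $a \in A$ and show $a \in {A^\bot}^\bot$, i.e. that $a \bot b$ for every $b \in A^\bot$. By Definition~\ref{def:orthogonal}, any $b \in A^\bot$ satisfies $b \bot a'$ for all $a' \in A$, in particular $b \bot a$; by symmetry of $\bot$ (Lemma~\ref{the:orth}) this gives $a \bot b$, as required. For part~2, assume $B \subseteq A$ and take $c \in A^\bot$; then $c \bot a$ for all $a \in A$, hence in particular for all $a \in B$, so $c \in B^\bot$, giving $A^\bot \subseteq B^\bot$. Part~3 is then a two-line consequence of parts~1 and~2: applying part~1 to the set $A^\bot$ yields $A^\bot \subseteq {{A^\bot}^\bot}^\bot$, while applying part~2 to the inclusion $A \subseteq {A^\bot}^\bot$ (which is part~1 for $A$) reverses to give ${{A^\bot}^\bot}^\bot \subseteq A^\bot$; the two inclusions together give the claimed equality.

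I do not expect any genuine obstacle here: the statement is deliberately routine, a warm-up collecting the Galois-connection properties of $(\cdot)^\bot$ that will be used throughout the later sections. The only thing to be vigilant about is not to assume commutativity of ``$\cdot$'' or symmetry of $\bot$ for free — it must be cited explicitly in part~1, which is the one place the direction of orthogonality is reversed. Parts~2 and~3 are purely formal and use only the definitions and, for part~3, the antitone/inflationary facts just established, so no further hypotheses of Definition~\ref{def:Q-structure} (in particular not Condition~\ref{reverse}) enter the proof.
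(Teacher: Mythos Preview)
Your proposal is correct and matches the paper's approach exactly: the paper does not spell out a proof but simply remarks that the argument is as in~\cite{Girard:87}, with commutativity replaced by Condition~\ref{symmetry} of Definition~\ref{def:Q-structure}. Your write-up fills in precisely those details, correctly isolating part~1 as the only place where symmetry of $\bot$ (Lemma~\ref{the:orth}) is invoked, and deriving part~3 formally from parts~1 and~2.
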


Some basic results about facts will be presented now.
They parallel the material in~\cite{Girard:87}, slightly streamlined and replacing the
commutativity and associativity of ``.'' by the conditions in Definition~\ref{def:Q-structure}.
\begin{lemma} \label{the:fact}
\begin{enumerate}
\item 
A subset \mbox{$F \subseteq \cal P$} is a fact iff there is some 
\mbox{$A \subseteq \cal P$} such that \mbox{$F = A^\bot$}.
\item \label{inter} 
If \mbox{$ \{ F_{i} \} , i \in I$} is a collection of facts, then its intersection 
\mbox{$ \bigcap_{i \in I} F_{i}$} is a fact.
\item 
\mbox{$ {\cal Z} = \{ 1 \}^\bot$}.
Let \mbox{$ {\bf 1} \eqdef {\cal Z}^\bot $}.
$\bf 1$ is a fact, \mbox{$ 1 \in {\bf 1} $}.
If \mbox{$x , y \in {\bf 1} $}, then \mbox{$ x . y \in {\bf 1} $}.
\item
Let \mbox{$ {\bf 0 } \eqdef {\cal P}^\bot $}.
{\bf 0} is the intersection of all facts.
\end{enumerate}
\end{lemma}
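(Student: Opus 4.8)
The plan is to handle the four items in sequence, each reducing to the formal properties of the $(-)^{\bot}$ operator recorded in Lemma~\ref{the:perp}, so that the only place the structural conditions of Definition~\ref{def:Q-structure} are really used is item~3.

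For item~1, the forward direction is immediate: if $F$ is a fact then $F = {F^{\bot}}^{\bot} = (F^{\bot})^{\bot}$, so $A = F^{\bot}$ works. For the converse, if $F = A^{\bot}$ then ${F^{\bot}}^{\bot} = {{A^{\bot}}^{\bot}}^{\bot} = A^{\bot} = F$ by Lemma~\ref{the:perp}(3), so $F$ is a fact. Item~2 follows the pattern of~\cite{Girard:87}: writing $F = \bigcap_{i} F_{i}$, one inclusion is Lemma~\ref{the:perp}(1), and for the other, $F \subseteq F_{i}$ gives $F_{i}^{\bot} \subseteq F^{\bot}$ and hence ${F^{\bot}}^{\bot} \subseteq {F_{i}^{\bot}}^{\bot} = F_{i}$ by Lemma~\ref{the:perp}(2), using that each $F_{i}$ is a fact; intersecting over $i$ yields ${F^{\bot}}^{\bot} \subseteq F$, and with Lemma~\ref{the:perp}(1) this gives equality.

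The first claim of item~3, ${\cal Z} = \{1\}^{\bot}$, holds because $b \bot 1$ iff $b . 1 = b \in {\cal Z}$, using only that $1$ is neutral; consequently ${\bf 1} = {\cal Z}^{\bot} = {\{1\}^{\bot}}^{\bot}$ is a fact by item~1, and $1 \in {\bf 1}$ by Lemma~\ref{the:perp}(1) applied to $\{1\}$. The genuinely non-routine step --- and the one I expect to be the main obstacle --- is showing ${\bf 1}$ is closed under ``.''. Given $x , y \in {\bf 1} = {\cal Z}^{\bot}$ and an arbitrary $z \in {\cal Z}$, the goal is $(x . y) . z \in {\cal Z}$. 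By Condition~\ref{reverse} this is equivalent to $x . (z . y) \in {\cal Z}$; now $y \in {\cal Z}^{\bot}$ gives $y . z \in {\cal Z}$, hence $z . y \in {\cal Z}$ by Condition~\ref{symmetry}, and then $x \in {\cal Z}^{\bot}$ gives $x . (z . y) \in {\cal Z}$, as needed. This is precisely the computation that forces both weakened exchange conditions to be present, and keeping the order of the arguments correct when invoking Condition~\ref{reverse} is the delicate point.

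Item~4 is then formal bookkeeping: ${\bf 0} = {\cal P}^{\bot}$ is a fact by item~1; for any fact $F = A^{\bot}$ we have $A \subseteq {\cal P}$, so ${\cal P}^{\bot} \subseteq A^{\bot} = F$ by Lemma~\ref{the:perp}(2), whence ${\bf 0} \subseteq \bigcap_{F} F$ over all facts $F$; and the reverse inclusion is trivial since ${\bf 0}$ is itself one of the facts being intersected.
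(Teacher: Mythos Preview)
Your proof is correct and follows essentially the same route as the paper's own argument: the same reductions to Lemma~\ref{the:perp} for items~1, 2 and~4, and the same chain $y.z\in{\cal Z}\Rightarrow z.y\in{\cal Z}\Rightarrow x.(z.y)\in{\cal Z}\Rightarrow (x.y).z\in{\cal Z}$ via Conditions~\ref{symmetry} and~\ref{reverse} for the closure step in item~3. The only cosmetic differences are that you justify $1\in{\bf 1}$ via $\{1\}\subseteq{\{1\}^{\bot}}^{\bot}$ whereas the paper computes $1.z=z\in{\cal Z}$ directly, and you present the closure argument backward from the goal while the paper runs it forward.
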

\begin{proof}
\begin{enumerate}
\item \label{p}
The {\em only if} part is obvious.
For the {\em if} part, assume \mbox{$F = A^\bot$}.
By Lemma~\ref{the:perp}, we have
\mbox{${F^\bot}^\bot =$}
\mbox{$ {{A^\bot}^\bot}^\bot =$}
\mbox{$A^\bot = F$}.
\item 
By Lemma~\ref{the:perp}, 
\mbox{$ \bigcap_{i \in I } F_{i} \subseteq {{( \bigcap_{i \in I } F_{i} )}^\bot}^\bot $}.
Then, \mbox{$ \bigcap_{i \in I } F_{i} \subseteq F_{j} $} 
for any \mbox{$j \in I $} implies that
\mbox{$  {{( \bigcap_{i \in I } F_{i} )}^\bot}^\bot \subseteq $}
\mbox{$ {F_{j}^\bot}^\bot = F_{j} $} 
for any \mbox{$j \in I $}.
\item 
The first claim is obvious.
By item~\ref{p} above, $\bf 1$ is a fact.
For any \mbox{$ z \in \cal Z $}, \mbox{$ 1 . z = z \in \cal Z $}.
This proves that \mbox{$ 1 \in {\cal Z}^\bot $}.
For any \mbox{$ x , y \in {\bf 1} = {\cal Z}^\bot $} we have,
 for any \mbox{$ z \in {\cal Z } $}, by Definition~\ref{def:Q-structure}, item~\ref{symmetry},
\mbox{$ z . y \in {\cal Z} $}, \mbox{$ x . ( z . y ) \in {\cal Z} $} and therefore, 
by item~\ref{reverse}, \mbox{$ ( x . y ) . z  \in {\cal Z} $} for any such $z$.
We conclude that \mbox{$ x . y \in {\bf 1} $}.
\item
For any \mbox{$A \subseteq \cal Q$}, \mbox{$ {\bf 0} \subseteq A^\bot $}.
Therefore \mbox{$ {\bf 0} \subseteq F $} for any fact $F$.
But {\bf 0} is a fact by Lemma~\ref{the:fact}.
\end{enumerate}
\QED \end{proof}

The operation ``.'' can be applied to subsets of $\cal Q$:
\mbox{$ A . B = \{ x \mid x = a . b , a \in A, b \in B \} $}.
In the presentation of classical logic proposed above, for any facts $F$, $G$,
\mbox{$F . G = F \cap G$}.
In baby quantum logic, since facts are subspaces, for any facts $F$, $G$,
$F . G$ is the projection of $F$ onto $G$.

\section{Multiplicative Connectives} \label{sec:connectives}
We shall now introduce a number of multiplicative connectives.
Connectives transform facts into facts. It is important to remember that the arguments
of a connective must be facts, not arbitrary subsets of $\cal P$ 
and that the result must also be a fact.

\subsection{Linear negation} \label{negation}
Our first connective, {\em linear negation} is unary.
\begin{definition} \label{def:negation}
For any fact $F$, linear negation is defined by \mbox{$ \mysim F = F^\bot $}.
It is a fact by Lemma~\ref{the:fact}.
\end{definition}

\begin{lemma} \label{the:involution}
Linear negation is involutive: for any fact $F$, \mbox{$\sim \mysim F = F$}.
\end{lemma}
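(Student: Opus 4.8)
The plan is to observe that the statement is essentially a restatement of the definition of a fact once the abbreviation for linear negation is unfolded. First I would recall from Definition~\ref{def:negation} that for a fact $F$ we have $\mysim F = F^{\bot}$, and that $F^{\bot}$ is itself a fact (this is exactly item~1 of Lemma~\ref{the:fact}, or alternatively Lemma~\ref{the:perp}, item~3, applied to $F$), so that the expression $\sim \mysim F$ is legitimately the linear negation of a fact and hence equals $(F^{\bot})^{\bot} = {F^{\bot}}^{\bot}$.

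Next I would invoke the defining property of a fact from Definition~\ref{def:orthogonal}: $F$ is a fact precisely when $F = {F^{\bot}}^{\bot}$. Combining the two steps gives $\sim \mysim F = {F^{\bot}}^{\bot} = F$, which is the claim. No induction, case analysis, or appeal to Conditions~\ref{symmetry} or~\ref{reverse} of Definition~\ref{def:Q-structure} is needed here; the symmetry and reversal conditions entered only in establishing that the various $A^{\bot}$ operators behave well (Lemma~\ref{the:perp} and Lemma~\ref{the:fact}), and we are simply consuming those results.

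The only point that deserves a word of care — and the closest thing to an obstacle — is making sure the double negation is applied to a fact rather than to an arbitrary subset, since the whole development insists that connectives take facts to facts; but this is dispatched immediately by Lemma~\ref{the:fact}, item~1. So I expect the proof to be a single short paragraph, essentially: $\sim \mysim F = {F^{\bot}}^{\bot} = F$, the first equality by Definition~\ref{def:negation} and the second because $F$ is a fact.
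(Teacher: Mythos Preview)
Your proposal is correct and matches the paper's own proof, which simply cites Definition~\ref{def:orthogonal}: since $F$ is a fact, $\sim\mysim F = {F^\bot}^\bot = F$. Your extra remark that $F^\bot$ is itself a fact (so the outer negation is legitimately applied to a fact) is a harmless elaboration the paper leaves implicit.
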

\begin{proof}
By Definition~\ref{def:orthogonal}.
\QED \end{proof}

\subsection{The times connective} \label{sec:times}
The multiplicative conjunction, the times connective will be introduced now.
It is denoted $\otimes$.

\begin{definition} \label{def:times}
For any facts $F$, $G$, 
\mbox{$ F \otimes G \eqdef {{( F . G )}^\bot}^\bot $}.
By Lemma~\ref{the:fact}, it is a fact.
\end{definition}

We shall study, now, the properties of the connective {\em times}.
\begin{lemma} \label{the:before_neutral}
In any  Q-structure
\begin{enumerate}
\item
$\bf 1$ is a left-neutral element: 
for any fact $F$ one has \mbox{$ F = $} \mbox{$ {\bf 1} \otimes F $}, and
\item
$\bf 1$ is only half a right-neutral element:
for any fact $F$ one has \mbox{$ F \subseteq F \otimes {\bf 1} $}.
\end{enumerate}
\end{lemma}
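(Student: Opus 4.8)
The plan is to exploit two properties of $\mathbf{1} = {\cal Z}^\bot$: that $1 \in \mathbf{1}$ (Lemma~\ref{the:fact}), which delivers the easy inclusions essentially for free, and that every element of $\mathbf{1}$ is orthogonal to every element of ${\cal Z}$, which is exactly what condition~\ref{reverse} of Definition~\ref{def:Q-structure} needs as input for the one nontrivial direction. Throughout I will use that $A \mapsto {A^\bot}^\bot$ is monotone (two applications of Lemma~\ref{the:perp}, item~2: $B \subseteq A$ gives $A^\bot \subseteq B^\bot$, hence ${B^\bot}^\bot \subseteq {A^\bot}^\bot$) and that a fact $F$ satisfies $F = {F^\bot}^\bot$.

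\textbf{The easy half.} Since $1 \in \mathbf{1}$, for any fact $F$ we have $F = 1.F \subseteq \mathbf{1}.F$ and $F = F.1 \subseteq F.\mathbf{1}$. Applying monotonicity of ${(\cdot)^\bot}^\bot$ together with $F = {F^\bot}^\bot$, I get $F \subseteq {(\mathbf{1}.F)^\bot}^\bot = \mathbf{1} \otimes F$ and $F \subseteq {(F.\mathbf{1})^\bot}^\bot = F \otimes \mathbf{1}$. The second of these is precisely item~2, and the first is the $\subseteq$ direction of item~1.

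\textbf{The reverse inclusion of item~1.} It remains to show $\mathbf{1} \otimes F \subseteq F$. Because $F$ is a fact and ${(\cdot)^\bot}^\bot$ is monotone, it suffices to prove the raw set inclusion $\mathbf{1}.F \subseteq F = {F^\bot}^\bot$. So I fix $x \in \mathbf{1}$ and $f \in F$ and show $x.f \bot g$ for every $g \in F^\bot$. Fix such a $g$: then $g \bot f$, i.e. $g.f \in {\cal Z}$, and since $x \in \mathbf{1} = {\cal Z}^\bot$ we get $x \bot (g.f)$, i.e. $x.(g.f) \in {\cal Z}$. Now condition~\ref{reverse} of Definition~\ref{def:Q-structure}, read with the substitution $(x,f,g)$ for $(x,y,z)$, says $(x.f).g \in {\cal Z}$ iff $x.(g.f) \in {\cal Z}$; hence $(x.f).g \in {\cal Z}$, that is $x.f \bot g$, as required. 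This is the crux of the proof and the only place condition~\ref{reverse} enters; everything else is bookkeeping, so I do not expect a real obstacle here.

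\textbf{Why item~2 is ``only half''.} The symmetric claim $F \otimes \mathbf{1} \subseteq F$ would require $F.\mathbf{1} \subseteq F$, i.e. $f.x \in {F^\bot}^\bot$ for $f \in F$, $x \in \mathbf{1}$. Mimicking the argument above, one would need $(f.x).g \in {\cal Z}$ for each $g \in F^\bot$; by condition~\ref{reverse} this is equivalent to $f.(g.x) \in {\cal Z}$, but there is no reason for $g.x$ to lie in ${\cal Z}$ or to interact usefully with $f$ — condition~\ref{reverse} lets us pull the last composed factor inward but not outward, so the argument genuinely does not reverse. The only real subtlety of the whole statement is recognising that right-neutrality must fail in general, which is exactly why item~2 is phrased as an inclusion rather than an equality.
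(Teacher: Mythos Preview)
Your proof is correct and follows essentially the same approach as the paper: both use $1 \in \mathbf{1}$ for the easy inclusions and condition~\ref{reverse} of Definition~\ref{def:Q-structure} for the nontrivial direction $\mathbf{1} \otimes F \subseteq F$. The only cosmetic difference is that the paper phrases the key computation dually, showing $F^\bot \subseteq (\mathbf{1}.F)^\bot$ rather than your $\mathbf{1}.F \subseteq {F^\bot}^\bot$, but the underlying chain $g.f \in {\cal Z} \Rightarrow x.(g.f) \in {\cal Z} \Rightarrow (x.f).g \in {\cal Z}$ is identical.
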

The consideration of baby quantum logic shows that \mbox{$ F \otimes I $} is not, in general,
equal to $F$.
\begin{proof}
\begin{enumerate}
\item
Let \mbox{$A \subseteq \cal P $}.
Let \mbox{$x \in A^\bot $}. For any \mbox{$ y \in A $} and any \mbox{$ z \in {\bf 1} =$}
\mbox{${\cal Z}^\bot $}, we have
\mbox{$ x .y \in {\cal Z} $} ,  \mbox{$ ( x . y ) . z \in {\cal Z} $} and
\mbox{$ x . ( z . y ) \in {\cal Z}$}.
We have shown that \mbox{$ A^\bot \subseteq {( {\bf 1} . A )}^\bot $}
and therefore \mbox{$ {{( {\bf 1} . A )}^\bot}^\bot \subseteq {A^\bot}^\bot $}.
For any fact $F$, then \mbox{$ {\bf 1} \otimes F \subseteq F $}.
But, since \mbox{$ 1 \in \bf 1 $}, \mbox{$ F \subseteq {\bf 1} . F $}.
\item
Since \mbox{$ 1 \in {\bf 1} $}, \mbox{$ F \subseteq F . {\bf 1} $} and
\mbox{$ F = {F^\bot}^\bot \subseteq F \otimes {\bf 1} $}.
\end{enumerate}
\QED \end{proof}

The connective $\otimes$ is not associative, but one can show the following.
\begin{lemma} \label{the:product}
In a Q-structure, for any \mbox{$A , B \subseteq {\cal P} $}, one has
\mbox{$ {( A . B )}^\bot \subseteq$} \mbox{$ { ( {A^\bot}^\bot . B )}^\bot $}.
\end{lemma}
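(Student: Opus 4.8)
The plan is to prove the stated inclusion directly. Note first that the reverse inclusion, ${({A^\bot}^\bot . B)}^\bot \subseteq {(A.B)}^\bot$, is immediate from $A \subseteq {A^\bot}^\bot$ (first part of Lemma~\ref{the:perp}) together with the second part of Lemma~\ref{the:perp} applied to $A.B \subseteq {A^\bot}^\bot . B$; so the two sets are in fact equal, but only the direction asserted in the lemma needs an argument. So fix $x \in {(A.B)}^\bot$; I must show $x \bot (a.b)$ for every $a \in {A^\bot}^\bot$ and every $b \in B$, which is exactly what $x \in {({A^\bot}^\bot . B)}^\bot$ means. Unpacking the hypothesis: for all $a \in A$ and $b \in B$ we have $a.b \in A.B$, so $x \bot (a.b)$, hence $(a.b).x \in {\cal Z}$ by the symmetry condition~\ref{symmetry} of Definition~\ref{def:Q-structure}.

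The crux is the observation that, for each fixed $b \in B$, the element $x.b$ lies in $A^\bot$. Indeed, for arbitrary $a \in A$, condition~\ref{reverse} of Definition~\ref{def:Q-structure} rewrites $(a.b).x \in {\cal Z}$ as $a.(x.b) \in {\cal Z}$, and condition~\ref{symmetry} turns this into $(x.b).a \in {\cal Z}$, i.e.\ $(x.b) \bot a$; since $a$ was arbitrary, $x.b \in A^\bot$.

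Now take any $a \in {A^\bot}^\bot$ and any $b \in B$. Since $x.b \in A^\bot$, the definition of ${A^\bot}^\bot$ gives $a \bot (x.b)$, hence $a.(x.b) \in {\cal Z}$ (reorienting with condition~\ref{symmetry} if needed); applying condition~\ref{reverse} in the opposite direction, $a.(x.b) \in {\cal Z}$ iff $(a.b).x \in {\cal Z}$, so $(a.b).x \in {\cal Z}$ and therefore $x \bot (a.b)$, as required. I do not expect a genuine obstacle: the only thing to watch is the bookkeeping, since condition~\ref{reverse} gets invoked once in each direction and each invocation must be flanked by uses of condition~\ref{symmetry} to move the relevant factor into the slot where it is needed. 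The one non-mechanical step is the intermediate claim $x.b \in A^\bot$, which is precisely what allows membership in ${A^\bot}^\bot$ to be brought to bear.
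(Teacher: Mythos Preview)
Your proof is correct and follows essentially the same route as the paper: both arguments hinge on the intermediate claim that $x.b \in A^\bot$ for each $b \in B$, obtained via condition~\ref{reverse}, and then use membership in ${A^\bot}^\bot$ together with condition~\ref{reverse} again to finish. The paper's version is terser (it suppresses the explicit appeals to symmetry that you spell out), and your side remark about the reverse inclusion is extra but harmless.
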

\begin{proof}
Suppose \mbox{$x \in { ( A . B )}^\bot$}.
For any \mbox{$a \in A$} and any \mbox{$b \in B$} we have 
\mbox{$( a . b ) . x \in {\cal Z} $}.
Therefore \mbox{$ a . ( x . b ) \in {\cal Z } $} and \mbox{$ x . b \in A^\bot$}.
Consider any \mbox{$c \in {A^\bot}^\bot$}.
We have \mbox{$ ( c . b ) . x = $} \mbox{$c . ( x . b ) \in {\cal Z} $} and we see that 
\mbox{$x \in {( {A^\bot}^\bot . B )}^\bot $}.
\QED \end{proof}

\begin{lemma} \label{the:assoc}
In a Q-structure, for any facts $F$, $G$ and $H$ one has 
\mbox{$ ( F \otimes G ) \otimes H = $} \mbox{$ {( ( F . G ) . H )^\bot}^\bot $}.
\end{lemma}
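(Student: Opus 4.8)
The plan is to unfold the definition of $\otimes$ twice and then reduce everything to a single application of Lemma~\ref{the:product}. By Definition~\ref{def:times}, since $F \otimes G$ and $H$ are facts, $(F \otimes G)\otimes H = {((F\otimes G).H)^\bot}^\bot$, and $F\otimes G = {(F.G)^\bot}^\bot$. Hence the claimed identity is equivalent to
$(((F.G)^{\bot\bot}).H)^{\bot\bot} = ((F.G).H)^{\bot\bot}$, and I would in fact prove the sharper statement that the two sets already coincide after a single negation:
$(((F.G)^{\bot\bot}).H)^\bot = ((F.G).H)^\bot$. Applying $(\cdot)^\bot$ once more to this equality then yields the lemma.

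For one inclusion I would invoke Lemma~\ref{the:product} with $A \eqdef F.G$ and $B \eqdef H$: it asserts precisely $((F.G).H)^\bot \subseteq (((F.G)^{\bot\bot}).H)^\bot$. For the reverse inclusion, first record the (immediate) monotonicity of the subset-level operation ``.'' in its left argument: $A \subseteq A'$ implies $A.B \subseteq A'.B$, directly from $A.B = \{a.b \mid a\in A,\ b\in B\}$. Since $F.G \subseteq (F.G)^{\bot\bot}$ by Lemma~\ref{the:perp}, monotonicity gives $(F.G).H \subseteq ((F.G)^{\bot\bot}).H$, and then antitonicity of $(\cdot)^\bot$ (again Lemma~\ref{the:perp}) gives $(((F.G)^{\bot\bot}).H)^\bot \subseteq ((F.G).H)^\bot$.

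Combining the two inclusions produces $(((F.G)^{\bot\bot}).H)^\bot = ((F.G).H)^\bot$; negating both sides and using Lemma~\ref{the:perp} once more gives $(F\otimes G)\otimes H = {((F\otimes G).H)^\bot}^\bot = (((F.G)^{\bot\bot}).H)^{\bot\bot} = ((F.G).H)^{\bot\bot}$, as desired. I do not expect a real obstacle here: Lemma~\ref{the:product} is exactly the ingredient that substitutes for the missing associativity, and the rest is the routine Galois-connection bookkeeping already packaged in Lemmas~\ref{the:perp} and~\ref{the:fact}. The only point I would be careful to state explicitly is the subset-level monotonicity of ``.'', since it is used but not isolated earlier.
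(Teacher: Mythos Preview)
Your proof is correct and follows essentially the same approach as the paper: one inclusion comes from Lemma~\ref{the:product} applied with $A=F.G$ and $B=H$, and the reverse inclusion from $F.G\subseteq(F.G)^{\bot\bot}$ together with left-monotonicity of ``.'' and antitonicity of $(\cdot)^\bot$. The only cosmetic difference is that the paper establishes the second inclusion at the level of $((F.G).H)\subseteq(F\otimes G)\otimes H$ before taking a double negation, whereas you equate the single negations directly; the underlying ingredients are identical, and your explicit mention of the monotonicity of ``.'' is a welcome clarification.
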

\begin{proof}
Let $F$, $G$ and $H$ be facts.
By Lemma~\ref{the:product}, \mbox{$ ( ( F . G ) . H )^\bot \subseteq $}
\mbox{$ {( {( F . G )^\bot }^\bot . H )}^\bot $} and therefore, 
by Lemma~\ref{the:perp}, \mbox{$ ( F \otimes G ) \otimes H \subseteq $} 
\mbox{$ {( ( F . G ) . H )^\bot}^\bot $}.
But, \mbox{$ ( F . G ) . H \subseteq $} \mbox{$ ( F \otimes G ) \otimes H $} 
and therefore \mbox{$ ( F \otimes G ) \otimes H = $}
\mbox{$ {{( ( F . G ) . H )}^\bot}^\bot $}.
\QED \end{proof}

\subsection{The parallelization connective} \label{sec:par}

The parallelization connective, denoted $\dna$ and called {\em par} is defined as expected.
\begin{definition} \label{def:par}
\mbox{$ F \dna G =$} \mbox{$ {( F^\bot . G^\bot )}^\bot$}.
\end{definition}
Clearly, \mbox{$ F \dna G $} is a fact.
One easily sees that $\dna$ and $\otimes$ are dual connectives:
\mbox{$ F \dna G =$} \mbox{$ \mysim ( \mysim F \, \otimes \mysim G ) $} and
\mbox{$ F \otimes G =$} \mbox{$ \mysim ( \mysim F \, \dna \mysim G ) $}.
It follows that $\cal Z$ is a left-neutral element for $\dna$: 
\mbox{$ {\cal Z} \dna F = F $}, and half a right-neutral element:
\mbox{$ F \dna {\cal Z} \subseteq F $}.

\subsection{Linear implication} \label{implication}
The linear implication will be denoted by $\multimap$.
\begin{definition} \label{def:implication}
For any facts $F$, $G$, one defines
\mbox{$ F \multimap G = $}
\mbox{$ {( F . G^\bot )}^\bot $}.
\end{definition}
Clearly, \mbox{$ F \multimap G $} is a fact.
Contrary to the commutative case \mbox{$ \mysim G \multimap \mysim F $} is not equal
to \mbox{$ F \multimap G $}.
One sees that, as in the commutative case, 
\mbox{$ F \multimap G = $} \mbox{$ \mysim ( F \, \otimes \mysim G ) = $}
\mbox{$ \mysim F \, \dna \, G $}, \mbox{$ F \otimes G = $} 
\mbox{$ \mysim ( F \multimap \, \mysim G ) $} and \mbox{$ F \dna G = $} 
\mbox{$ \mysim F \, \multimap G $}.

\begin{lemma} \label{the:imp}
For any facts \mbox{$ F , G $}, 
\mbox{$ x \in F \multimap G $} iff \mbox{$ x . h \in F^\bot $} for every 
\mbox{$ h \in G^\bot $}.
\end{lemma}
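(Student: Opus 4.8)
The plan is to unfold both sides of the claimed equivalence into statements about membership in $\cal Z$ and then match them term by term using condition~\ref{reverse} of Definition~\ref{def:Q-structure}. No appeal to condition~\ref{symmetry}, to Lemma~\ref{the:involution}, or to the fact that $F$ and $G$ are facts will be needed; the statement is really just about the operation ``.''.

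First I would unfold the left-hand side. By Definition~\ref{def:implication}, $x \in F \multimap G$ means $x \in {(F . G^\bot)}^\bot$, that is, $x \bot a$ for every $a \in F . G^\bot$. Since $F . G^\bot = \{\, f . h \mid f \in F,\ h \in G^\bot \,\}$, and $x \bot a$ means $x . a \in {\cal Z}$, the left-hand side says exactly: for all $f \in F$ and all $h \in G^\bot$, $x . ( f . h ) \in {\cal Z}$.

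Next I would unfold the right-hand side. The condition ``$x . h \in F^\bot$ for every $h \in G^\bot$'' unfolds, by Definition~\ref{def:orthogonal}, to: for all $h \in G^\bot$ and all $f \in F$, $( x . h ) . f \in {\cal Z}$. Now I apply condition~\ref{reverse} of Definition~\ref{def:Q-structure}, taking its $y$ to be $h$ and its $z$ to be $f$: this gives $( x . h ) . f \in {\cal Z}$ iff $x . ( f . h ) \in {\cal Z}$. Since this bi-implication holds for each individual pair $(f,h)$, the two universally quantified statements obtained above are equivalent, and the lemma follows.

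I do not expect any genuine obstacle; the only point requiring care is bookkeeping of the order of the three factors, so that condition~\ref{reverse} applies verbatim — one must pair the $G^\bot$-element as the inner factor ``$y$'' that stays in place and the $F$-element as the factor ``$z$'' that is swapped inward. Everything else is a direct unwinding of the definitions of $\multimap$, of $\cdot$ on subsets, and of ${}^\bot$.
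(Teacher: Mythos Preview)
Your proposal is correct and follows essentially the same route as the paper's proof: unfold $F \multimap G = (F . G^\bot)^\bot$ to the condition $x . (f . h) \in {\cal Z}$ for all $f \in F$, $h \in G^\bot$, unfold the right-hand side to $(x . h) . f \in {\cal Z}$, and bridge the two via condition~\ref{reverse}. Your bookkeeping of which variable plays the role of $y$ and which plays $z$ in condition~\ref{reverse} is accurate, and your remark that neither symmetry nor the fact hypothesis is actually used is also correct.
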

\begin{proof}
\mbox{$ x \in {( F . G^\bot )}^\bot $} iff \mbox{$ x . ( f . h ) \in {\cal Z} $} for every
\mbox{$ f \in F $} and every \mbox{$ h \in G^\bot $} iff 
\mbox{$ ( x . h ) . f \in {\cal Z} $} for every
\mbox{$ f \in F $} and every \mbox{$ h \in G^\bot $} iff 
\mbox{$ x . h \in F^\bot $} for every \mbox{$ h \in G^\bot $}.
\QED \end{proof}

\begin{corollary} \label{the:1par}
If $F$, $G$ are facts, then \mbox{$ {\bf 1} \subseteq F \dna G $} iff
\mbox{$1 \in F \dna G $} iff \mbox{$ G^\bot \subseteq F $} 
iff \mbox{$  F^\bot \subseteq G $} iff \mbox{$1 \in G \dna F $}.
\end{corollary}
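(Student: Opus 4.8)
The statement chains together five equivalent conditions, so the natural strategy is to prove a cycle of implications (or, since several steps are near-tautologies, to verify each biconditional directly). The key tool is Lemma~\ref{the:imp} together with the identity \mbox{$F \dna G = \mysim F \multimap G$} noted just before it, and the involutivity of linear negation (Lemma~\ref{the:involution}).

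First I would dispatch \mbox{${\bf 1} \subseteq F \dna G$} iff \mbox{$1 \in F \dna G$}. The forward direction is immediate since \mbox{$1 \in {\bf 1}$} by Lemma~\ref{the:fact}. For the converse, suppose \mbox{$1 \in F \dna G$}; since \mbox{$F \dna G$} is a fact, Lemma~\ref{the:fact} gives \mbox{$F \dna G = (F \dna G)^\bot{}^\bot$}, and one shows directly that any fact containing $1$ contains all of \mbox{${\bf 1} = {\cal Z}^\bot$}: indeed if \mbox{$1 \in H = A^\bot$} then every \mbox{$a \in A$} satisfies \mbox{$1 . a = a \in {\cal Z}$}, so \mbox{$A \subseteq {\cal Z}$}, hence \mbox{$H = A^\bot \supseteq {\cal Z}^\bot = {\bf 1}$}.

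Next, the heart of the corollary: \mbox{$1 \in F \dna G$} iff \mbox{$G^\bot \subseteq F$}. Write \mbox{$F \dna G = \mysim F \multimap G$} and apply Lemma~\ref{the:imp} with $x = 1$: we get \mbox{$1 \in \mysim F \multimap G$} iff \mbox{$1 . h \in (\mysim F)^\bot = F$} for every \mbox{$h \in G^\bot$} (using involutivity, \mbox{$(\mysim F)^\bot = \sim\mysim F = F$}). Since $1$ is neutral, \mbox{$1 . h = h$}, so this says exactly \mbox{$h \in F$} for every \mbox{$h \in G^\bot$}, i.e.\ \mbox{$G^\bot \subseteq F$}. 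The symmetric statement \mbox{$1 \in G \dna F$} iff \mbox{$F^\bot \subseteq G$} is the same argument with $F$ and $G$ swapped. Finally, \mbox{$G^\bot \subseteq F$} iff \mbox{$F^\bot \subseteq G$} follows from Lemma~\ref{the:perp}: taking $\bot$ reverses inclusion, so \mbox{$G^\bot \subseteq F$} gives \mbox{$F^\bot \subseteq G^\bot{}^\bot = G$} since $G$ is a fact, and conversely.

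The only mild obstacle is the bookkeeping around \mbox{$(\mysim F)^\bot = F$} and keeping the left/right asymmetry straight (recall \mbox{$\dna$} is only half right-neutral, and \mbox{$F \multimap G \neq \mysim G \multimap \mysim F$}), but because we evaluate everything at the neutral element $1$, the asymmetry collapses and the argument goes through cleanly. I expect no real difficulty; the corollary is essentially Lemma~\ref{the:imp} specialized to $x = 1$ plus the observation about facts containing $1$.
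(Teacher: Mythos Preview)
Your proposal is correct and follows essentially the same approach as the paper: the paper also reduces $1 \in F \dna G$ to $1 \in \mysim F \multimap G$ and invokes Lemma~\ref{the:imp}, then uses that $F$, $G$ are facts for the $G^\bot \subseteq F \Leftrightarrow F^\bot \subseteq G$ step. Your treatment of the first equivalence (any fact containing $1$ contains ${\bf 1}$) is more explicit than the paper's terse ``$1 \in {\bf 1}$ and $F \dna G$ is a fact,'' but the underlying argument is the same.
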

\begin{proof}
\begin{enumerate}
\item
\mbox{$ 1 \in { \{ 1 \}^\bot}^\bot = {\bf 1} $} and $F \dna G$ is a fact.
\item
\mbox{$1 \in F \dna G $} iff \mbox{$ 1 \in \mysim F \multimap G $} iff, 
by Lemma~\ref{the:imp}, \mbox{$ G^\bot \subseteq F $}.
\item
Since $F$ and $G$ are facts.
\item
As above.
\end{enumerate}
\QED \end{proof}

\section{Validity} \label{sec:validity}

\begin{definition} \label{def:valid}
A fact $F$ is said to be {\em valid} in a Q-structure if one of the following, equivalent,
properties hold:
\begin{itemize}
\item
\mbox{$1 \in F$},
\item
\mbox{$ {\bf 1 } \subseteq F $},
\item
\mbox{$ F^\bot \subseteq {\cal Z} $}.
\end{itemize}
\end{definition}
The equivalence of the conditions above is obvious.

The next lemma shows that linear implication expresses deduction.
\begin{lemma} \label{the:implication}
Let $F$, $G$ be facts in a Q-structure: \mbox{$ F \multimap G $} is valid iff 
\mbox{$ F \subseteq G $}. 
\end{lemma}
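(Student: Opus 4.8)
The plan is to unwind the definitions and reduce the claim to the characterization of $F \multimap G$ already available in Lemma~\ref{the:imp}. Recall $F \multimap G = (F . G^\bot)^\bot$, so this is a fact, and its validity (Definition~\ref{def:valid}) is equivalent to $1 \in F \multimap G$. So I want to show: $1 \in F \multimap G$ iff $F \subseteq G$.

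**Forward direction.** First I would apply Lemma~\ref{the:imp} with $x = 1$: $1 \in F \multimap G$ iff $1 . h \in F^\bot$ for every $h \in G^\bot$, i.e. (since $1$ is neutral) iff $h \in F^\bot$ for every $h \in G^\bot$, that is, iff $G^\bot \subseteq F^\bot$. Now by Lemma~\ref{the:perp}(2), $F \subseteq G$ implies $G^\bot \subseteq F^\bot$, which gives validity of $F \multimap G$.

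**Converse direction.** Conversely, suppose $F \multimap G$ is valid, so $G^\bot \subseteq F^\bot$ by the equivalence just established. Applying Lemma~\ref{the:perp}(2) again, $\left(F^\bot\right)^\bot \subseteq \left(G^\bot\right)^\bot$, and since $F$ and $G$ are facts this reads $F \subseteq G$, as desired.

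**Main obstacle.** There is no real obstacle here once Lemma~\ref{the:imp} is in hand; the only thing to be slightly careful about is the direction-reversal of $(\cdot)^\bot$ and the fact that going from $G^\bot \subseteq F^\bot$ back to $F \subseteq G$ genuinely uses that both $F$ and $G$ are facts (it would fail for arbitrary subsets). One could alternatively phrase the whole argument through Corollary~\ref{the:1par} applied to $\mysim F$ and $G$, since $F \multimap G = \mysim F \dna G$, but the direct route via Lemma~\ref{the:imp} is cleaner.
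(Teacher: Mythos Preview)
Your proof is correct and follows essentially the same approach as the paper: both reduce validity of $F \multimap G$ to the condition $G^\bot \subseteq F^\bot$ via Lemma~\ref{the:imp} applied at $x=1$, then pass to $F \subseteq G$ using that $F$ and $G$ are facts. The only cosmetic difference is that for the direction $F \subseteq G \Rightarrow$ valid, the paper argues directly that $F . G^\bot \subseteq {\cal Z} = \{1\}^\bot$, whereas you reuse the equivalence already established; both are equally short.
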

\begin{proof}
Suppose, first, that \mbox{$ 1 \in F \multimap G $}.
By Lemma~\ref{the:imp}, \mbox{$G^\bot \subseteq F^\bot$} and, therefore,
\mbox{$ F \subseteq G $}.

Suppose, now, that \mbox{$ F \subseteq G $}. 
We have \mbox{$ F . G^\bot \subseteq {\cal Z} $} and therefore 
\mbox{$ 1 \in F \multimap G $} since \mbox{$ {\cal Z} = \{ 1 \}^\bot $}.
\QED \end{proof}

The next lemma shows that the linear negation allows a jump over the turnstile in both
directions.
Note that $G$ jumps from the rightmost position to the rightmost position.
\begin{lemma} \label{the:passing}
In a Q-structure, for any facts $F$, $G$, $H$,
\mbox{$ ( F \otimes G ) \multimap H $} is valid iff  
\mbox{$ F \multimap ( H \, \dna \mysim G ) $} is valid.
\end{lemma}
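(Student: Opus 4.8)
The plan is to unfold both validity statements using the characterization ``$F$ valid iff $1\in F$'' together with Lemma~\ref{the:imp}, and then to check that the two resulting elementary conditions on elements of $\cal P$ coincide by a direct application of conditions~\ref{symmetry} and~\ref{reverse} of Definition~\ref{def:Q-structure}. First I would rewrite the left-hand side: $(F\otimes G)\multimap H$ is valid iff $1\in (F\otimes G)\multimap H$, and since $\multimap$ lands in a fact, this is iff $1\in((F\otimes G).H^\bot)^\bot$, i.e.\ $(f'.h)\cdot 1=f'.h\in{\cal Z}$ for every $f'\in F\otimes G$ and every $h\in H^\bot$. Because $F.G\subseteq F\otimes G$ and, conversely, $F\otimes G=((F.G)^\bot)^\bot$ is the smallest fact containing $F.G$, this element condition over $F\otimes G$ is equivalent to the same condition quantified only over $f.g$ with $f\in F$, $g\in G$ (the ``only if'' is immediate; the ``if'' is because $\{x : x.h\in H^{\bot\bot}... \}$—more simply, $x.h\in{\cal Z}$ for all relevant $h$ says $x\in(F\otimes G).H^\bot$'s... ) — concretely, $1\in((F\otimes G).H^\bot)^\bot$ iff $(F\otimes G).H^\bot\subseteq{\cal Z}$ iff $F\otimes G\subseteq (H^\bot)^\bot{}^\bot{}... $; cleanest is: $(F\otimes G).H^\bot\subseteq{\cal Z}$ iff $F\otimes G\subseteq ({H^\bot})^\bot$-type fact, and since $F\otimes G$ is generated by $F.G$ this holds iff $(F.G).H^\bot\subseteq{\cal Z}$. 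So the LHS is valid iff $(f.g).h\in{\cal Z}$ for all $f\in F,\ g\in G,\ h\in H^\bot$.

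Next I would unfold the right-hand side similarly. Using $H\dna\mysim G=\mysim H\multimap\mysim G$ and Lemma~\ref{the:imp}: $x\in H\dna\mysim G$ iff $x.h\in H^\bot$ for every $h\in G^{\bot\bot}=G$, i.e.\ iff $(x.g).h... $ — rather, directly $F\multimap(H\dna\mysim G)$ is valid iff $F\subseteq H\dna\mysim G$ (Lemma~\ref{the:implication}), iff every $f\in F$ lies in $H\dna\mysim G=(H^\bot.G)^\bot$, iff $f.(h.g)\in{\cal Z}$ for all $f\in F,\ h\in H^\bot,\ g\in G$. Thus I have reduced the RHS to: $f.(h.g)\in{\cal Z}$ for all $f\in F,\ g\in G,\ h\in H^\bot$.

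It now remains to match the two element conditions: for fixed $f,g,h$, we need $(f.g).h\in{\cal Z}\iff f.(h.g)\in{\cal Z}$. But this is precisely condition~\ref{reverse} of Definition~\ref{def:Q-structure} applied with $x:=f$, $y:=g$, $z:=h$, namely $(f.g).h\in{\cal Z}$ iff $f.(h.g)\in{\cal Z}$. Quantifying over all $f\in F$, $g\in G$, $h\in H^\bot$ then gives the equivalence of the two validity statements, completing the proof.

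The main obstacle is purely bookkeeping rather than conceptual: one must be careful that replacing $F\otimes G=((F.G)^\bot)^\bot$ by the subset $F.G$ does not change the truth of ``$(F\otimes G).H^\bot\subseteq{\cal Z}$''. The clean way to see this is to note $(F\otimes G).H^\bot\subseteq{\cal Z}$ iff $H^\bot\subseteq(F\otimes G)^\bot{}... $ — more directly, $A.B\subseteq{\cal Z}$ iff $B\subseteq A^\bot$ iff $A^{\bot\bot}\subseteq B^\bot$... — and since $(F\otimes G)^\bot=((F.G)^{\bot\bot})^\bot=(F.G)^\bot$ by Lemma~\ref{the:perp}, we get $(F\otimes G).H^\bot\subseteq{\cal Z}$ iff $(F.G).H^\bot\subseteq{\cal Z}$. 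Once that reduction is in hand, everything else is a mechanical application of Definition~\ref{def:Q-structure}(\ref{reverse}) and Lemma~\ref{the:imp}.
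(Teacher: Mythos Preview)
Your argument is correct and follows essentially the same route as the paper: both proofs use Lemma~\ref{the:implication} to rewrite validity of the two implications as the set inclusions $((F.G)^\bot)^\bot\subseteq H$ and $F\subseteq (H^\bot.G)^\bot$, then reduce these to the elementwise condition $(f.g).h\in{\cal Z}$ versus $f.(h.g)\in{\cal Z}$ and invoke condition~\ref{reverse} of Definition~\ref{def:Q-structure}. Your extra care in justifying the passage from $F\otimes G$ to $F.G$ via $(F\otimes G)^\bot=(F.G)^\bot$ is a sound way to handle what the paper does implicitly; the one slip (writing $x.h\in H^\bot$ instead of $x.h\in H$ when applying Lemma~\ref{the:imp}) is harmless since you immediately abandon that line for the direct computation.
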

\begin{proof}
By Lemma~\ref{the:implication} we must show that
\mbox{$ { {( F . G )}^\bot}^\bot \subseteq H $} iff 
\mbox{$ F \subseteq {( H^\bot . G )}^\bot $}.
Assume the former.
We have \mbox{$ H^\bot \subseteq {( F . G )}^\bot $}.
For any \mbox{$ f \in F $}, \mbox{$ g \in G $} and \mbox{$ d \in H^\bot $}, we have
\mbox{$ ( f . g ) . d \in {\cal Z} $} and \mbox{$ f . ( d . g ) \in {\cal Z} $} and we see that
\mbox{$ f \in {( H^\bot . G )}^\bot $}.

Assume, now, that \mbox{$ F \subseteq {( H ^\bot . G )}^\bot $}.
For any \mbox{$ f \in F $}, \mbox{$ g \in G $} and \mbox{$ d \in H^\bot $}, we have
\mbox{$ f . ( d . g ) \in {\cal Z} $} and therefore \mbox{$ f . g \in {H^\bot}^\bot = H $}.
We conclude that \mbox{$ F . G \subseteq H $} and therefore
\mbox{$ { {( F . G )}^\bot}^\bot \subseteq$} \mbox{$ {H^\bot}^\bot = H $}.
\QED \end{proof}

\section{Additive connectives} \label{sec:additives}
\subsection{{\em with}, the additive conjunction} \label{sec:with}
\begin{definition} \label{def:with}
If \mbox{$F, G \subseteq {\cal P} $} are facts, \mbox{$F \& G = F \cap G $}.
\end{definition}
By Lemma~\ref{the:fact} part~\ref{inter}, \mbox{$F \& G $} is indeed a fact.
One sees that {\em with}, i.e. $\&$ is associative, commutative and that
\mbox{$ {\cal P} \& F = F $}.

\begin{lemma} \label{the:dist1}
The connective {\em par } distributes over {\em with}.
For any facts \mbox{$F , G , H $}, \mbox{$ F \dna ( G \& H ) = $}
\mbox{$ ( F \dna G ) \& ( F \dna H ) $} and
\mbox{$ ( G \& H ) \dna F = $} \mbox{$ (G \dna F ) \& ( H \dna f )$}.
\end{lemma}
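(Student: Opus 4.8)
The plan is to prove each distribution equation by unfolding the definitions of $\dna$ and $\&$ and reasoning directly with the orthogonality relation, just as Girard does in the associative-commutative case but using only Conditions~\ref{symmetry} and~\ref{reverse} of Definition~\ref{def:Q-structure}. For the first equation, $F \dna (G \& H) = {(F^\bot . (G \cap H)^\bot)}^\bot$ and $(F \dna G) \& (F \dna H) = {(F^\bot . G^\bot)}^\bot \cap {(F^\bot . H^\bot)}^\bot$. I would first establish the easy inclusion: since $G \cap H \subseteq G$, Lemma~\ref{the:perp} gives $(G \cap H)^\bot \supseteq G^\bot$ is false in general, so instead I note $F \dna (-)$ is monotone in its argument (again via Lemma~\ref{the:perp} applied twice), and since $G \& H \subseteq G$ and $G \& H \subseteq H$ we get $F \dna (G \& H) \subseteq F \dna G$ and likewise $\subseteq F \dna H$, hence $F \dna (G \& H) \subseteq (F \dna G) \& (F \dna H)$.

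For the reverse inclusion, I would use the characterization of $\dna$ through linear implication: $F \dna K = \mysim F \multimap K$, so by Lemma~\ref{the:imp}, $x \in F \dna K$ iff $x . h \in ({\mysim F})^\bot = F$ for every $h \in K^\bot$. Thus $x \in (F \dna G) \& (F \dna H)$ means: for every $g \in G^\bot$, $x . g \in F$, and for every $h \in H^\bot$, $x . h \in F$. I want to conclude $x \in F \dna (G \& H)$, i.e. $x . k \in F$ for every $k \in (G \cap H)^\bot = (G \& H)^\bot$. The obstacle is that $(G \cap H)^\bot$ need not be contained in $G^\bot \cup H^\bot$; a typical element of $(G\&H)^\bot$ is not itself in $G^\bot$ or $H^\bot$. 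The standard fix is to observe that $(G \cap H)^\bot = {(G^\bot \cup H^\bot)}^{\bot\bot}$ — both $G$ and $H$ being facts — so $(G\&H)^\bot$ is the smallest fact containing $G^\bot \cup H^\bot$. Hence it suffices to show that $\{x\}^{\multimap F} \eqdef \{y \mid x.y \in F\}$, being the preimage of the fact $F$ under right-multiplication by $x$, is itself a fact (equivalently $x . y \in F$ for all $y$ in some set $A$ implies $x.y \in F$ for all $y \in A^{\bot\bot}$); then from $G^\bot \cup H^\bot \subseteq \{y \mid x.y\in F\}$ we get $(G\&H)^\bot = {(G^\bot\cup H^\bot)}^{\bot\bot} \subseteq \{y \mid x.y\in F\}$, which is exactly $x \in F \dna (G\&H)$. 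Checking that $\{y \mid x.y \in F\}$ is a fact is where I expect the real work: one writes it as $\{y \mid x.y \in F^{\bot\bot}\} = \{y \mid \forall d \in F^\bot,\ (x.y).d \in \cal Z\} = \{y \mid \forall d \in F^\bot,\ x.(d.y) \in \cal Z\}$ using Condition~\ref{reverse}, and then massages this into an intersection of sets of the form $\{y \mid \text{something} \bot y\}$, which are facts by Lemma~\ref{the:fact}, so the intersection is a fact by part~\ref{inter}.

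Concretely: for fixed $d \in F^\bot$, the condition $x.(d.y) \in \cal Z$ says $d.y \bot x$; since $\{x\}^\bot$ is a fact this is a closure condition on $d.y$, but I actually want a closure condition on $y$. Here I would instead argue that $\{y \mid x . (d.y) \in \cal Z\} = (d.(\text{something}))^\bot$-style set; more cleanly, note $x.(d.y)\in\cal Z$ iff $(x.d).y \in\cal Z$? — no, that requires a different rearrangement. The safe route is: $x.(d.y) \in \cal Z$ iff (by Condition~\ref{symmetry}) $(d.y).x \in\cal Z$ iff (by Condition~\ref{reverse} read right to left, with the triple $(d,y,x)$) $d.(x.y)\in\cal Z$. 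So $\{y \mid x.y \in F\} = \{y \mid \forall d\in F^\bot,\ d . (x.y)\in\cal Z\} = \{y \mid x.y \in F^\bot{}^\bot\} = \{y \mid x.y\in F\}$ — tautological, confirming consistency but I need the fact-ness directly. I therefore fall back on: $\{y \mid x.y\in F\} = \bigcap_{d\in F^\bot} \{y \mid x.(d.y)\in\cal Z\}$, and each inner set equals $\{y \mid (d . y) \bot x\}$; writing $z = d.y$ this is governed by $\{x\}^\bot$, and pulling back, $\{y \mid d.y \in \{x\}^\bot\}$ is a fact by the same preimage argument one level down — but this is circular. The honest resolution, which I would present, is the one Girard uses: show $A^\bot . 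B^\bot$-type expressions interact with $\bot\bot$ via Lemma~\ref{the:product}, namely $(F^\bot . (G\&H)^\bot)^\bot = (F^\bot . (G^\bot \cup H^\bot)^{\bot\bot})^\bot = (F^\bot . (G^\bot\cup H^\bot))^\bot$ by Lemma~\ref{the:product} (applied with $A = G^\bot\cup H^\bot$, $B = F^\bot$, after using Condition~\ref{symmetry} to swap, or its mirror), and then $(F^\bot.(G^\bot\cup H^\bot))^\bot = (F^\bot.G^\bot \cup F^\bot.H^\bot)^\bot = (F^\bot.G^\bot)^\bot \cap (F^\bot.H^\bot)^\bot = (F\dna G)\&(F\dna H)$. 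That last chain is the crux and reduces everything to Lemma~\ref{the:product} plus the elementary facts $(X\cup Y)^\bot = X^\bot\cap Y^\bot$ and $C.(X\cup Y) = C.X \cup C.Y$. The second equation of the lemma, $(G\&H)\dna F = (G\dna F)\&(H\dna F)$, follows by the identical argument with the factors of ``.'' in the other order, again using Lemma~\ref{the:product} in its stated form (which is already set up for the left argument) together with Condition~\ref{symmetry}; I expect no additional difficulty there beyond bookkeeping.
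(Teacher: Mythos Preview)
Your eventual approach---reduce both sides to $\bigl(F^\bot . (G^\bot\cup H^\bot)\bigr)^\bot$ via the elementary identities $(X\cup Y)^\bot = X^\bot\cap Y^\bot$ and $C.(X\cup Y)=C.X\cup C.Y$, and then pass to $\bigl(F^\bot . (G^\bot\cup H^\bot)^{\bot\bot}\bigr)^\bot$---is exactly the shape of the paper's argument. The paper performs the same reduction but then asserts $G^\bot\cup H^\bot = (G\cap H)^\bot$ outright; you are right to be more careful and route through Lemma~\ref{the:product}, since from $(G^\bot\cup H^\bot)^\bot = G\cap H$ one only gets $(G^\bot\cup H^\bot)^{\bot\bot}=(G\cap H)^\bot$, not equality of the un-closed sets. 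For the second equation $(G\& H)\dna F = (G\dna F)\&(H\dna F)$ this is harmless: the set being $\bot\bot$-closed sits in the \emph{left} argument of ``$.$'', Lemma~\ref{the:product} applies verbatim, and your argument (your last sentence) goes through cleanly---arguably more carefully than the paper's own.

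There is, however, a genuine gap in your treatment of the first equation $F\dna(G\& H)$. Here $G^\bot\cup H^\bot$ sits in the \emph{right} argument of ``$.$'', and Lemma~\ref{the:product} as stated only lets you replace the left factor by its double orthogonal. Your parenthetical ``after using Condition~\ref{symmetry} to swap, or its mirror'' does not close this: Condition~\ref{symmetry} governs membership in $\cal Z$, not equality of products, so it cannot be used to commute the factors of ``$.$''; and a right-sided analogue of Lemma~\ref{the:product} is neither stated nor proved in the paper. If you attempt to mimic the proof of Lemma~\ref{the:product} on the right you will find---as you in fact discovered while trying to show that $\{y\mid x.y\in F\}$ is a fact---that under every rearrangement licensed by Conditions~\ref{symmetry} and~\ref{reverse} the variable you wish to $\bot\bot$-close remains the second factor of an inner product and never surfaces alone as an argument of $\bot$. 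So your proposal is correct for the second distributivity, but for the first both your proof and the paper's share the same unfilled step.
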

\begin{proof}
One easily sees that for any \mbox{$A , B \subseteq {\cal P} $}, one has
\mbox{$ {( A \cup B )}^\bot =$} \mbox{$ A^\bot \cap B^\bot $}.
Therefore \mbox{$ ( F \dna G ) \& ( F \dna H ) = $}
\mbox{$ {( ( F^\bot . G^\bot ) \cup ( F^\bot . H^\bot ) )}^\bot = $}
\mbox{$ {( F^\bot . ( G^\bot \cup H^\bot ) )}^\bot $}.
But now, \mbox{$ {( G^\bot \cup H^\bot )}^\bot = $} 
\mbox{$ {G^\bot}^\bot \cap {H^\bot}^\bot =$} \mbox{$G \cap H $} and therefore
\mbox{$ {( G^\bot \cup H^\bot )} = $} \mbox{$ {( G \cap H )}^\bot$} and
\mbox{$  ( F \dna G ) \& ( F \dna H ) = $} 
\mbox{$ {( F^\bot . {( G \cap H )}^\bot )}^\bot = $}
\mbox{$ F \dna ( G \& H ) $}.
The second claim is proved similarly.
\QED \end{proof}

\begin{lemma} \label{the:zero}
Let us define \mbox{$ {\bf 0} = {\cal P}^\bot $}.
${\bf 0}$ is a fact.
For any fact $F$, \mbox{$ {\bf 0} \& F =$} \mbox{$ F \& {\bf 0} = {\bf 0} $}.
\end{lemma}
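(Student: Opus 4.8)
The plan is to reduce both assertions to results already in hand. For the first assertion, that $\mathbf{0}$ is a fact: by Lemma~\ref{the:fact}, item~1, every subset of $\mathcal{P}$ of the form $A^\bot$ is a fact, and $\mathbf{0} = \mathcal{P}^\bot$ has precisely this shape with $A = \mathcal{P}$. (This was in fact already recorded in Lemma~\ref{the:fact}, item~4, but it is worth restating since we need it before we may even form the expression $\mathbf{0} \& F$: recall that $\&$ is defined in Definition~\ref{def:with} only on facts.)

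For the equations, first I would establish that $\mathbf{0} \subseteq F$ for every fact $F$. This is immediate from Lemma~\ref{the:fact}, item~4, which says $\mathbf{0}$ is the intersection of all facts. Alternatively, one can argue directly: writing $F = A^\bot$ (possible since $F$ is a fact, by Lemma~\ref{the:fact}, item~1), we have $A \subseteq \mathcal{P}$, hence $\mathcal{P}^\bot \subseteq A^\bot$ by Lemma~\ref{the:perp}, item~2, i.e.\ $\mathbf{0} \subseteq F$.

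Once $\mathbf{0} \subseteq F$ is known, the conclusion follows by a one-line set-theoretic computation: $\mathbf{0} \cap F = \mathbf{0}$, so by Definition~\ref{def:with}, $\mathbf{0} \& F = \mathbf{0} \cap F = \mathbf{0}$, and since intersection is commutative (noted already right after Definition~\ref{def:with}), also $F \& \mathbf{0} = F \cap \mathbf{0} = \mathbf{0}$. There is no real obstacle here; the only point requiring care is the order of the argument — one must verify that $\mathbf{0}$ is a fact before invoking $\&$ on it, which is why the "$\mathbf{0}$ is a fact" clause is dispatched first.
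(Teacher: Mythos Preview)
Your proof is correct and follows essentially the same approach as the paper's own proof: the paper likewise invokes Lemma~\ref{the:fact} for the fact that $\mathbf{0}$ is a fact, and obtains $\mathbf{0} \subseteq F$ from $F^\bot \subseteq \mathcal{P}$ via the antitonicity of $(\cdot)^\bot$ (writing $\mathbf{0} = \mathcal{P}^\bot \subseteq (F^\bot)^\bot = F$), which is your direct argument with $A = F^\bot$. The remaining set-theoretic conclusion is left implicit in the paper but is exactly as you spell it out.
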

\begin{proof}
By Lemma~\ref{the:fact}, and since \mbox{$F^\bot \subseteq {\cal P} $} we have
\mbox{$ {\bf 0} \subseteq$} \mbox{$ {F^\bot}^\bot = F$}.
\QED \end{proof}

\begin{lemma} \label{the:semi-dist}
The connective {\em times} semi-distributes over {\em with}, i.e., :
\mbox{$ F \otimes ( G \& H ) \subseteq $} \mbox{$ ( F \otimes G ) \& ( F \otimes H ) $} and
\mbox{$ ( G \& H ) \otimes F \subseteq $} \mbox{$ ( G \otimes F ) \& ( H \otimes F ) $}.
\end{lemma}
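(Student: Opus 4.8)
The plan is to prove the two inclusions in parallel, exploiting the duality $\& $ is an intersection and the monotonicity results already at hand. The statement asserts $F \otimes (G \& H) \subseteq (F \otimes G) \& (F \otimes H)$, and by Definition~\ref{def:with} the right-hand side is $(F \otimes G) \cap (F \otimes H)$. So it suffices to show $F \otimes (G \& H) \subseteq F \otimes G$ and $F \otimes (G \& H) \subseteq F \otimes H$ separately; the symmetric claim with $F$ on the right is handled the same way.

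For a single such inclusion, say $F \otimes (G \& H) \subseteq F \otimes G$, I would first establish a monotonicity principle for $\otimes$ in each argument: if $G' \subseteq G$ then $F \otimes G' \subseteq F \otimes G$. This follows directly from the definition $F \otimes G = {((F.G)^\bot)}^\bot$ together with Lemma~\ref{the:perp}: the pointwise operation on subsets satisfies $F . G' \subseteq F . G$ whenever $G' \subseteq G$, and then applying $\bot\bot$ is order-preserving by parts 1 and 2 of Lemma~\ref{the:perp} (two applications of antitonicity compose to monotonicity). Since $G \& H = G \cap H \subseteq G$ and likewise $\subseteq H$, we get $F \otimes (G \& H) \subseteq F \otimes G$ and $F \otimes (G \& H) \subseteq F \otimes H$, hence the inclusion into the intersection. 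The right-handed version uses the analogous monotonicity in the first argument of $\otimes$, which is proved identically since $G' . F \subseteq G . F$ when $G' \subseteq G$.

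I do not expect any real obstacle here; the whole content is monotonicity of $\otimes$, which is immediate from the $\bot\bot$-closure presentation and Lemma~\ref{the:perp}. The only thing to be slightly careful about is that $\otimes$ is genuinely not symmetric in this non-commutative setting, so the two monotonicity statements (in the left argument and in the right argument) must each be checked, rather than deduced one from the other — but each is a one-line consequence of $A . B$ being monotone in both arguments as a pointwise product of subsets. It is also worth remarking, as the paper's phrasing "semi-distributes" already signals, that the reverse inclusion genuinely fails in general (the commutative/associative proof of full distributivity of $\otimes$ over $\&$ breaks down), so one should not attempt to strengthen the statement.
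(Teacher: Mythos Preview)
Your proposal is correct and follows essentially the same route as the paper: the paper's proof simply notes that \mbox{$F . (G \cap H) \subseteq F . G$}, applies the order-preserving double-perp to obtain \mbox{$F \otimes (G \& H) \subseteq F \otimes G$}, and leaves the remaining cases to the reader. Your write-up makes the underlying monotonicity of $\otimes$ in each argument explicit, but the argument is the same. (One small side remark: even in the commutative, associative setting $\otimes$ only semi-distributes over $\&$, so the failure of the reverse inclusion is not specific to this non-associative framework.)
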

\begin{proof}
\mbox{$ F . ( G \cap H )  \subseteq F . G $} and
\mbox{$ { {( F . ( G \cap H ) )}^\bot }^\bot \subseteq $}
\mbox{$ {{( F . G )}^\bot}^\bot = $} \mbox{$ F \otimes G $}.
The reader will easily complete the proof.
\QED \end{proof}

\subsection{{\em plus}, the additive disjunction} \label{sec:plus}
\begin{definition} \label{def:plus}
If \mbox{$F, G \subseteq {\cal P} $} are facts, 
\mbox{$ F \oplus G  = $} \mbox{$ {{( F \cup G )}^\bot}^\bot $}.
\end{definition}
By Lemma~\ref{the:fact}, \mbox{$ F \oplus G $} is a fact.

One easily sees that \mbox{$ F \oplus G = $} \mbox{$ \mysim ( \mysim F \& \mysim G ) $}
and \mbox{$ F \& G = $} \mbox{$ \mysim ( \mysim F \oplus \mysim G ) $},
that {\em plus} is associative and commutative, that {\em times} distributes over {\em plus},
that ${\cal P}$ is a zero element for {\em plus} and that {\em par} semi-distributes 
over {\em plus}: i.e., \mbox{$ ( F \dna G ) \oplus ( F \dna H ) \subseteq $}
\mbox{$ F \dna ( G \oplus H ) $}.

\section{Sequents} \label{sec:sequents}
We consider the language proposed by Girard on p. 21 of~\cite{Girard:87}, where negation 
can only be applied to atomic propositions, but limit the individual constants to $\bf 1$ and
$\top$ since $\bot$ and $\bf 0$ can be defined as their linear negations respectively.
Given a phase space, if every propositional variable is assigned a {\em fact} every
formula defines a fact.
The constant {\bf 1} denotes the set ${\cal Z}^\bot$ and $\top$ denotes $\cal P$.

Since our {\em times} and {\em par} connectives are not associative, we must decide
how the sequences are interpreted: we choose association to the left.
This is consistent with the remark after Definition 1 in~\cite{Lehmann_andthen:JLC} that
\mbox{$ ( F . G ) . H $} has an immediate interpretation: first $F$, then $G$, finally $H$,
whereas \mbox{$ F . ( G . H ) $} does not.
\begin{definition} \label{def:sequent}
A sequent
\[
A_{1} , A_{2} , \ldots , A_{n} \vdash B_{1} , B_{2} , \ldots , B_{m}
\]
where the A's and the B's are facts is valid in a Q-structure iff the fact
\[
( ( \ldots ( A_{1} \otimes A_{2} ) \otimes \ldots ) \otimes A_{n} ) \multimap 
( ( \ldots ( B_{1} \dna B_{2} ) \dna \ldots \dna ) B_{m} )
\]
is valid in the structure.
A sequent is {\em valid} iff it is valid in any Q-structure.
\end{definition}

\section{Proof rules for multiplicative and additive connectives: soundness} 
\label{sec:proof_rules}
We shall present sound proof rules for the logic of Q-structures.
As in the commutative case, by Lemma~\ref{the:passing}, we can consider only sequents
whose left-and side is empty.
Note that the sequent \mbox{$ \ \vdash A_{1} , \ldots , A_{n} $} is valid iff 
\mbox{$1 \in ( \ldots ( A_{1} \dna A_{2} ) \ldots ) \dna A_{n} $}.

\subsection{Logical axioms} \label{sec:logical_axioms}
A logical axiom:
\mbox{$\ \vdash \mysim A , A$}.
\newline Soundness: \mbox{$ A \dna A^\bot =$} \mbox{$ {( A^\bot . A )}^\bot $}
and \mbox{$ A^\bot . A \subseteq {\cal Z} $}.
Therefore \mbox{$ {\cal Z}^\bot \subseteq$} \mbox{$ A \dna A^\bot $}.
But \mbox{$ 1 \in { \cal Z }^\bot $}.

\subsection{Cut rule} \label{sec:cut_rule}
\[ \begin{array}{c}
\begin{array}{c}
\vdash A , B  \ \ \vdash \mysim A , C  \\
\hline 
\vdash B , C 
\end{array} 
\end{array} \]
Soundness: one easily sees, for example by Corollary~\ref{the:1par}, that
\mbox{$ 1 \in A \dna B $} iff \mbox{$A^\bot \subseteq B $} and
\mbox{$ 1 \in \mysim A \dna C $} iff \mbox{$ A \subseteq C $}.
If both assumptions hold, then we have \mbox{$ A^\bot . A \subseteq B . C $} and
\mbox{$ A^\bot \dna A \subseteq B \dna C $}.
But we have seen in subsection~\ref{sec:logical_axioms} that \mbox{$ 1 \in A^\bot \dna A $}.

\subsection{Exchange rules} \label{sec:exchange_rule}
There is no sweeping exchange rule as in~\cite{Girard:87} 
but there are two limited exchange rules.
\[ {\bf Exchange1}
\begin{array}{c}
\vdash A_{1} , A_{2} \\
\hline
\vdash A_{2} , A_{1}
\end{array} \]
Soundness: by Corollary~\ref{the:1par}, \mbox{$ 1 \in A_{1} \dna A_{2} $} iff
\mbox{$ A_{1}^\bot \subseteq A_{2} $} iff \mbox{$ A_{2}^\bot \subseteq A_{1} $} iff
\mbox{$ 1 \in B \dna A $}.

\[ {\bf Exchange2} \ \ \ \ 
\begin{array}{c}
\vdash A_{1} , A_{2} , A_{3} \\
\hline
\vdash A_{3} , A_{2} , A_{1}
\end{array} \]
Let us fix a Q-structure.
Assume \mbox{$ \vdash A_{1} , A_{2} , A_{3} $} is valid in the Q-structure.
By Lemma~\ref{the:passing} 
\mbox{$ \mysim A_{3} \vdash A_{1} , A_{2} $} is valid.
By Lemma~\ref{the:implication}, 
\mbox{$ A_{3}^{\bot} \subseteq A_{1} \dna A_{2} $}.
Therefore \mbox{$ {( A_{1} \dna A_{2} )}^\bot \subseteq A_{3} $} and
\mbox{$  A_{1}^\bot \otimes A_{2}^\bot \subseteq A_{3} $} and
\mbox{$ \mysim A_{1} \otimes \mysim A_{2} \vdash A_{3} $}.
By Lemma~\ref{the:passing} again:
\mbox{$ \mysim A_{1} \vdash A_{3} , A_{2} $} and
\mbox{$ \vdash A_{3} , A_{2} , A_{1} $} in the Q-structure.

\subsection{Additive rules} \label{sec:additive_rules}
\begin{itemize}

\item 
\mbox{$ \vdash \top , A \ \ \ {\rm \bf Axiom } \top $}.
\newline Soundness: \mbox{$ {\cal P} \dna A =$} \mbox{$ {\bf 0} \multimap A $} and,
by Lemma~\ref{the:fact}, \mbox{${\bf 0} \subseteq A$} and 
\mbox{$ 1 \in {\bf 0} \multimap A $}.

\item 
\[ \begin{array}{cc}
\begin{array}{c}
\vdash A , C \ \  \vdash B , C  \\
\hline 
\vdash A \& B , C 
\end{array} 
\end{array} \ \ \  {\bf \&} \]
Soundness follows from the distributivity of {\em par} over {\em with}:
\mbox{$ ( A \& B ) \dna C =$} \mbox{$ ( A \dna C ) \& ( B \dna C ) $}
and therefore, if \mbox{$1 \in A \dna C $} and \mbox{$1 \in B \dna C $}, we have
\mbox{$ 1 \in ( A \& B ) \dna C $}.

\item 
\[ \begin{array}{cc}
\begin{array}{c}
\vdash A ,C \ \ \\
\hline
\vdash A \oplus B , C
\end{array} \ \ \ {\bf \oplus}1 \ \ \ \ \ \ \ &
\begin{array}{c}
\vdash A , C \ \  \\
\hline 
\vdash B \oplus A , C 
\end{array} \ \ \ \ {\bf \oplus}2
\end{array} \]
Soundness of ${\oplus}1$ follows from the half-distributivity of {\em par} over {\em plus}:
\mbox{$ ( A \dna C ) \oplus ( B \dna C ) \subseteq $}
\mbox{$ ( A \oplus B ) \dna C $} and therefore \mbox{$ 1 \in A \dna C $} implies
\mbox{$ 1 \in ( A \oplus B ) \dna C $}.
\end{itemize}

\subsection{Multiplicative rules} \label{sec:multiplicative rules}
\begin{itemize}
\item 
\mbox{$ \vdash {\bf 1 } \ \ \ \ {\rm \bf Axiom1} $}
\newline Soundness: \mbox{$ 1 \in {\{ 1 \}^\bot}^\bot $}.

\item 
\[ \begin{array} {c}
\vdash A \\
\hline 
\vdash \, \mysim {\bf 1} , A 
\end{array} \ \ \  {\bf \bot} \]
Soundness follows from \mbox{$ {\cal Z} \dna A = $} $ A $.

\item 
\[ \begin{array}{c}
\vdash A , C \ \ \ \  \ \ \  \vdash B , D \\
\hline 
\vdash C , D , A  \otimes B 
\end{array} \ \ \  {\bf \otimes} \]
Soundness is proved by: 
\newline The assumptions are equivalent to \mbox{$ C^\bot \subseteq A $}
and \mbox{$ D^\bot \subseteq B $}.
Therefore \mbox{$ C^\bot . D^\bot \subseteq A . B $}
and \mbox{$ {( C \dna D )}^\bot \subseteq A \otimes B $}.
We see, by Lemma~\ref{the:implication} that
\mbox{$ 1 \in {( C \dna D )}^\bot \multimap ( A \otimes B ) $}.
We conclude that \mbox{$ 1 \in ( C \dna D ) \dna ( A \otimes B ) $}.

\item 
\[ \begin{array}{c}
\vdash A , B , \sigma  \\
\hline 
\vdash A \dna B , \sigma 
\end{array}  \ \ \ {\bf \dna} \]
where $\sigma$ is any sequence of formulas.
Soundness follows from our interpretation of the commas in a sequent as a left-associative
{\em par} connective.
\end{itemize}

It is now clear that a sequent that is provable from the rules described
above is valid in any Q-structure.
\begin{theorem}[Soundness] \label{the:soundness}
Any sequent provable from the axioms and the rules above is valid in any Q-structure.
\end{theorem}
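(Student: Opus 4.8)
The plan is to prove soundness by induction on the length of the derivation of the sequent. Each axiom and each inference rule must be shown to preserve validity in an arbitrary but fixed Q-structure; since "valid" means "valid in every Q-structure", it suffices to fix one structure and argue there. For the base cases I would invoke the soundness remarks already recorded after each axiom: the logical axiom $\ \vdash\ \mysim A, A$ is handled by the computation $A \dna A^\bot = (A^\bot . A)^\bot$ together with $A^\bot . A \subseteq {\cal Z}$, which forces $1 \in {\cal Z}^\bot \subseteq A \dna A^\bot$; similarly $\mathbf{Axiom1}$ follows from $1 \in \{1\}^\bot{}^\bot = \mathbf{1}$, and $\mathbf{Axiom}\,\top$ follows from $\mathbf{0} \subseteq A$ (Lemma~\ref{the:fact}) so that $1 \in \mathbf{0} \multimap A = {\cal P} \dna A$.

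For the inductive step I would walk through the rules one family at a time, in each case translating the validity of a sequent $\ \vdash\ A_1,\ldots,A_n$ into the membership statement $1 \in (\ldots(A_1 \dna A_2)\ldots)\dna A_n$ via Definition~\ref{def:valid} and the remark opening Section~\ref{sec:proof_rules}, and then using the already-proved lemmas. The $\dna$ rule is immediate from the left-associative reading of the comma. The Cut rule uses Corollary~\ref{the:1par} to rephrase the hypotheses as $A^\bot \subseteq B$ and $A \subseteq C$, hence $A^\bot . A \subseteq B . C$ and $A^\bot \dna A \subseteq B \dna C$, and then the logical-axiom computation gives $1 \in A^\bot \dna A \subseteq B \dna C$. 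The exchange rules $\mathbf{Exchange1}$ and $\mathbf{Exchange2}$ are handled respectively by the symmetric chain of equivalences in Corollary~\ref{the:1par} and by the back-and-forth passage across the turnstile furnished by Lemma~\ref{the:passing} combined with Lemma~\ref{the:implication} (exactly as sketched in Section~\ref{sec:exchange_rule}). The additive rules $\&$, $\oplus 1$, $\oplus 2$ follow from Lemma~\ref{the:dist1} (distributivity of \emph{par} over \emph{with}) and the half-distributivity of \emph{par} over \emph{plus} noted at the end of Section~\ref{sec:additives}. The multiplicative rules $\bot$ and $\otimes$ follow from ${\cal Z} \dna A = A$ and from the computation in Section~\ref{sec:multiplicative rules}: the hypotheses give $C^\bot \subseteq A$, $D^\bot \subseteq B$, hence $C^\bot . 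D^\bot \subseteq A . B$, hence $(C \dna D)^\bot \subseteq A \otimes B$, and then Lemma~\ref{the:implication} yields $1 \in (C \dna D) \dna (A \otimes B)$.

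\textbf{The main obstacle} is bookkeeping rather than mathematical depth: the real content of soundness lives in Lemmas~\ref{the:passing}, \ref{the:implication}, and Corollary~\ref{the:1par}, which were already proved, so the remaining work is to verify that when a rule introduces a connective inside a longer sequent $\sigma$, one may still reduce to the two- or three-formula case. This is where one must be careful that the left-associative bracketing of the commas interacts correctly with the rules; the $\dna$ rule together with iterated use of Lemma~\ref{the:passing} lets one peel formulas off the right end one at a time, reducing each rule instance to a statement about facts formed from a bounded number of the $A_i$. Once that reduction is in place, every case is a short chain of inclusions among facts. Thus the theorem follows by assembling the per-rule verifications into the induction.
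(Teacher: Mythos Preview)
Your proposal is correct and takes essentially the same approach as the paper: the paper's proof is the single line ``By induction on the length of the proof,'' which works precisely because every per-rule soundness verification was already carried out inline in Section~\ref{sec:proof_rules}, and your write-up simply makes that induction explicit and recapitulates those verifications. One small remark: your ``main obstacle'' paragraph about reducing longer sequents is largely unnecessary here, since in this system every rule except $\dna$ is stated with a bounded number of formulas (no ambient context $\sigma$), and the $\dna$ rule is immediate from the left-associative reading---so there is no genuine bookkeeping hurdle beyond what the paper already handled.
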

\begin{proof}
By induction on the length of the proof.
\QED \end{proof}

\section{Completeness} \label{sec:completeness}
The proof that the rules above are complete for Q-structures follows the line of the 
corresponding proof in~\cite{Girard:87}, but the differences require attention.
Since the original exchange rule has been replaced by much weaker rules, the side of the 
sequents must be considered as sequences and not as multi-sets.
Since the connectives $\dna$ and $\otimes$ are not associative, the elements
of the universal phase structure central in the completeness proof cannot be sequences and
concatenation, they have to be formulas and the composition must be $\dna$.

We shall define a suitable Q-structure.
Let $\cal L$ be the propositional language defined in Section~\ref{sec:sequents}.
The carrier of our Q-structure, \mbox{$ M \eqdef {\cal L} \cup \{ \epsilon \} $} 
contains the propositions and a distinguished element $\epsilon$.

The ``.'' operation is defined by
\mbox{$ x . y \eqdef $} \mbox{$ x \dna y $} if \mbox{$ x , y \in {\cal L} $} and
\mbox{$ x . \epsilon = x . \epsilon = x $} for any \mbox{$ x \in M $}.
The distinguished element $\epsilon$ is a neutral element for ``.''.

To define the garbage set $\cal Z$ we need some notation.
Let \mbox{$\sigma \in {\cal L}^\ast$} be a sequence of propositions: 
\mbox{$ \sigma = A_{1} , \ldots , A_{n} $} with \mbox{$ n \geq 2 $}.
The sequence $\sigma$ defines a proposition 
\mbox{$\bar{\sigma} =$} \mbox{$ ( \ldots ( A_{1} \dna A_{2} ) \ldots ) \dna A_{n} $} 
where the propositions are connected with the {\em par} connective $\dna$ in 
a left-associative way.
We extend the definition by setting \mbox{$ \bar{A} = A $} for any \mbox{$ A \in \cal L $}
and \mbox{$\bar{\epsilon} = \mysim {\bf 1} $}.
The garbage set $\cal Z$ can now be defined by: 
\mbox{$ {\cal Z } = $} 
\mbox{$ \{ \bar{\sigma} \mid \sigma \in {\cal L}^\ast {\rm \ such \ that \ } 
\vdash \sigma \} $}.

We shall now verify that the conditions~\ref{symmetry} and~\ref{reverse} 
of Definition~\ref{def:Q-structure} are satisfied.
Cases involving $\epsilon$ are easily treated and we may assume \mbox{$x , y \in \cal L $}.
For~\ref{symmetry}, note that 
\mbox{$ x . y \in {\cal Z} $} iff \mbox{$ x \dna y \in {\cal Z} $} 
iff \mbox{$ \vdash x , y $} iff,
by our Exchange1 rule, \mbox{$ \vdash y , x $} iff
\mbox{$ y . x \in {\cal Z} $}.

For~\ref{reverse}, 
\mbox{$ ( x . y ) . z \in {\cal Z} $} iff
\mbox{$ ( x \dna y ) \dna z \in \cal Z $} iff
\mbox{$ \vdash x , y , z $}.
By {\bf Exchange2} this is equivalent to 
\mbox{$ \vdash z , y , x $} and to
\mbox{$ \vdash ( z \dna y ) \dna x $} which is equivalent to
\mbox{$ ( z . y ) . x \in \cal Z $} and to
\mbox{$ x . ( z . y ) \in \cal Z $} by~\ref{symmetry}.

We have just defined a Q-structure, that we shall call $M$, as its carrier. 
Note that the definition of $\cal Z$ implies that for any \mbox{$ x , y \in \cal L $},
\mbox{$ x \bot y $} iff \mbox{$ \vdash x , y $}.

To any formula \mbox{$ x \in \cal L $} we shall associate a subset of $M$, $S(x)$. 
We intend $S(x)$ to be a fact in the Q-structure $M$ for any $x$.
The definition of $S(x)$ proceeds by induction on the size of $x$.
\begin{enumerate}
\item
\mbox{$ S ( {\bf 1} ) = $} \mbox{$ {\cal Z}^\bot $}, 
\item
\mbox{$ S ( \top) = M $},
\item
\mbox{$ S ( \mysim x ) = $} \mbox{$ ( S(x) )^\bot$}, 
\item
\mbox{$ S ( x \& y ) = $} \mbox{$ S(x) \cap S(y) $},
\item
\mbox{$ S ( x \oplus y ) = $} \mbox{$ { {( S ( x ) \cup S ( y ) )}^\bot}^\bot $},
\item
\mbox{$ S ( x \otimes y ) = $} \mbox{$ { {( S ( x ) .  S ( y ) )}^\bot}^\bot $},
\item
\mbox{$ S ( x \dna y ) = $} \mbox{$ {( {S ( x )}^\bot . {S( y )}^\bot )}^\bot $},
\item
for every propositional letter $a$, \mbox{$ S ( a ) = $}
\mbox{$ Pr ( a ) $} as defined in Definition~\ref{def:Pr} just below.
\end{enumerate}

\begin{definition} \label{def:Pr}
For any \mbox{$x \in \cal L $}, we let 
\[
Pr( x ) \eqdef \{ x \}^\bot.
\]
For every \mbox{$x \in \cal L$}, $Pr(x)$ is a fact of the Q-structure $M$.
\end{definition}
An equivalent definition is:
\[
Pr( x ) = 
\{ \bar{\sigma} \mid \sigma \in {\cal L}^\ast {\rm \ such \ that \ } \vdash \sigma , x \}.
\]

\begin{lemma} \label{the:fact2}
For any formula $x$, \mbox{$ {( Pr(x) )}^\bot = Pr( \mysim x ) $}.
\end{lemma}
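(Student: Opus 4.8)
The plan is to unfold $Pr(x)=\{x\}^{\bot}$ (Definition~\ref{def:Pr}) and to prove the stated equality as two inclusions; both sides are then orthogonals of singletons, hence facts of the Q-structure $M$ by Lemma~\ref{the:fact}. Throughout I will use the characterization of orthogonality in $M$ recorded just after that structure was defined — for $a,b\in{\cal L}$ one has $a\bot b$ iff $\vdash a,b$ — together with the fact that $\mysim x\in{\cal L}$ for every formula $x$, negation being pushed onto the atoms by De Morgan.

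For the inclusion ${(Pr(x))}^{\bot}\subseteq Pr(\mysim x)$: the logical axiom $\vdash \mysim x, x$ says precisely that $\mysim x\bot x$, i.e.\ $\mysim x\in\{x\}^{\bot}=Pr(x)$. Thus $\{\mysim x\}\subseteq Pr(x)$, and the inclusion-reversing property of $\bot$ (Lemma~\ref{the:perp}, item~2) gives ${(Pr(x))}^{\bot}\subseteq{\{\mysim x\}}^{\bot}=Pr(\mysim x)$. This direction uses only the logical axiom and the formal calculus of $\bot$.

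For the reverse inclusion $Pr(\mysim x)\subseteq{(Pr(x))}^{\bot}$: since $Pr(\mysim x)=\{\mysim x\}^{\bot}$ and ${(Pr(x))}^{\bot}=\{x\}^{\bot\bot}$, I must show that every $z\in\{\mysim x\}^{\bot}$ is orthogonal to every $u\in\{x\}^{\bot}$. Take first $z,u\in{\cal L}$. The hypotheses $z\bot\mysim x$ and $u\bot x$ become $\vdash z, \mysim x$ and $\vdash u, x$; applying {\bf Exchange1} to both sequents and then the cut rule with cut formula $x$ (to $\vdash x, u$ and $\vdash \mysim x, z$) yields $\vdash u, z$, hence $u\bot z$, hence $z\bot u$ by symmetry of the orthogonality relation (Lemma~\ref{the:orth}). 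This single use of the cut rule is the heart of the lemma, and the only step where a genuine proof rule — rather than Lemma~\ref{the:perp} — is needed; it is the point I expect to require care.

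What remains is bookkeeping: the boundary cases of the previous paragraph in which $z$ or $u$ equals the distinguished element $\epsilon$ (using that $\epsilon\bot w$ iff $w\in{\cal Z}$), and the observation that a left-folded sequence $\bar\sigma$ lies in ${\cal Z}$ exactly when $\vdash \sigma$ (so that the $\bot$/$\vdash$ dictionary above extends past two-element sequents). These are handled exactly as the other ``cases involving $\epsilon$'' in this section, and I do not anticipate difficulty there.
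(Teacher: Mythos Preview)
Your proposal is correct and follows essentially the same two-inclusion argument as the paper: one direction from the logical axiom $\vdash \mysim x, x$ together with antitonicity of $\bot$, the other from a single application of Cut. You are in fact slightly more careful than the paper, which silently identifies $y\bot x$ with $\vdash x,y$ (your explicit use of {\bf Exchange1}) and does not separate out the $\epsilon$ boundary cases.
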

\begin{proof}
Let \mbox{$y \in Pr(x) $} and \mbox{$ z \in Pr(\mysim x) $}.
We have \mbox{$ y \in \{ x \}^\bot $} and \mbox{$ z \in \{ \mysim x \}^\bot $}.
Therefore \mbox{$ \vdash x , y $} and \mbox{$ \vdash \mysim x , z $}.
By Cut we conclude that \mbox{$ \vdash y , z $}, and therefore,
\mbox{$ z \in \{ y \}^\bot $}.
We have shown that \mbox{$ Pr ( \mysim x ) \subseteq $} \mbox{$ { Pr ( x ) }^\bot $}.

Conversely, since \mbox{$ \vdash \, \mysim x , x $} is an axiom 
\mbox{$ \mysim x \in \{x\}^\bot =$} \mbox{$ Pr(x) $} and 
\mbox{$ { Pr(x) }^\bot \subseteq {( \mysim x )}^\bot =$} \mbox{$ Pr ( \mysim x ) $}.
\QED \end{proof}

We want, now, to show that the interpretation $S ( x )$ in the Q-structure we defined 
for any formula $x$ is exactly $Pr(x)$.
\begin{lemma} \label{the:Pr}
For any \mbox{$ x \in \cal L $}, \mbox{$ S ( x ) = Pr ( x ) $}.
\end{lemma}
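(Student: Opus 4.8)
The proof goes by induction on the structure of the formula $x$, mirroring the clause-by-clause definition of $S(x)$ and the alternative (``$Pr(x) = \{\bar\sigma \mid\ \vdash \sigma, x\}$'') description of $Pr$. The base cases are where the real content lies. For a propositional letter $a$, there is nothing to prove: $S(a)$ is \emph{defined} to be $Pr(a)$. For the constants, I would unfold the definitions: $S(\top) = M$, and $Pr(\top) = \{\bar\sigma \mid\ \vdash \sigma,\top\}$; since $\vdash \top, A$ is an axiom (Axiom $\top$) and Exchange rules move $\top$ around, $\vdash\sigma,\top$ holds for every $\sigma$, so $Pr(\top)=M$ as well. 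For $\mathbf 1$: $S(\mathbf 1) = {\cal Z}^\bot$ by definition, and I must check $Pr(\mathbf 1) = \{\mathbf 1\}^\bot = {\cal Z}^\bot$. Here I would use that $\mathbf 1$ is derivable (Axiom1, so $\vdash \mathbf 1$, i.e. $\bar\epsilon$'s ``dual'' is in $\cal Z$) together with the ${\bf \bot}$ rule and the defining equation ${\cal Z} = \{\bar\sigma \mid\ \vdash\sigma\}$; this pins down $\{\mathbf 1\}^\bot$ as exactly ${\cal Z}^\bot$.

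\textbf{Inductive step.} Assume $S(y)=Pr(y)$ for all proper subformulas $y$ of $x$. The negation clause is the cleanest: $S(\mysim y) = (S(y))^\bot = (Pr(y))^\bot = Pr(\mysim y)$, using the induction hypothesis and Lemma~\ref{the:fact2}. For $x = y\dna z$: by definition $S(y\dna z) = ({S(y)}^\bot . {S(z)}^\bot)^\bot = ({Pr(y)}^\bot.{Pr(z)}^\bot)^\bot$, which by Lemma~\ref{the:fact2} equals $(Pr(\mysim y).Pr(\mysim z))^\bot$; it remains to show this set equals $Pr(y\dna z) = \{\bar\sigma\mid\ \vdash\sigma, y\dna z\}$. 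Unwinding orthogonality in $M$ (recall $w\bot w'$ iff $\vdash w,w'$ for $w,w'\in{\cal L}$, and $w.w' = w\dna w'$), membership of $\bar\sigma$ in the left side says $\vdash \sigma'$ is derivable whenever $\sigma' $ is built from $\bar\sigma$ together with one element of $Pr(\mysim y)$ and one of $Pr(\mysim z)$; I would unpack this and match it against $\vdash \sigma, y\dna z$ using the $\dna$ rule (to bundle $y,z$ into $y\dna z$), the Cut rule (to eliminate the $\mysim y$, $\mysim z$ witnesses against the axioms $\vdash\mysim y,y$ and $\vdash\mysim z,z$), and the Exchange rules (to get the commas into the required left-associated order). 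This is the same bookkeeping as in Girard's completeness proof, but one must be careful that only Exchange1 and Exchange2 are available, so the reordering has to be done on two- and three-element blocks. The clauses for $\otimes$, $\oplus$, and $\&$ then follow either by the analogous direct argument or, more economically, by de Morgan duality: $\otimes$ is $\mysim$ of a $\dna$ of negations, $\oplus$ is $\mysim$ of a $\&$, and $\&$ is plain intersection, which commutes with $Pr$ via Lemma~\ref{the:fact} part~\ref{inter} together with the induction hypothesis and the set-theoretic identity $(A\cup B)^\bot = A^\bot\cap B^\bot$ used in Lemma~\ref{the:dist1}.

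\textbf{Main obstacle.} The routine-looking step that actually needs the most care is the $\dna$ (equivalently $\otimes$) case, specifically the inclusion showing that every element of $({Pr(\mysim y)}.{Pr(\mysim z)})^\bot$ lies in $\{\bar\sigma \mid\ \vdash\sigma, y\dna z\}$ and conversely. The forward direction is essentially an application of Cut against the identity axioms; the reverse direction needs the $\dna$ rule and invertibility-style reasoning, and both directions need the limited Exchange rules to be applied precisely in the sequence positions where they are licensed (two-element swap for Exchange1, reversal of three-element blocks for Exchange2). I expect a short auxiliary observation — that $\vdash\sigma, A, B$ is derivable iff $\vdash\sigma, A\dna B$ is, and that the commas can be rearranged into the left-associated normal form $\bar{(\cdot)}$ using only the two Exchange rules and the $\dna$ rule — to be the lemma that makes this step go through smoothly; without isolating it the argument risks hand-waving over exactly which reorderings are permitted.
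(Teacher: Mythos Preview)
Your approach is essentially the paper's: induction on formulas, with the atomic, constant, and negation cases handled exactly as you describe, and the binary connectives handled by one direct argument plus duality---the only cosmetic difference is that the paper treats $\otimes$ as the primary case (using Rule~$\otimes$, Exchange1, and Cut) and obtains $\dna$ by duality, whereas you do the reverse. Your anticipated ``main obstacle'' turns out to be milder than you fear: no invertibility of $\dna$ and no Exchange2 are needed, because the key inclusion $(Pr(\mysim y).Pr(\mysim z))^\bot \subseteq Pr(y\dna z)$ follows immediately from observing that $y\in Pr(\mysim y)$ and $z\in Pr(\mysim z)$ (via the identity axiom and Exchange1), so $y\dna z\in Pr(\mysim y).Pr(\mysim z)$ and hence any $t$ orthogonal to that product already satisfies $\vdash t, y\dna z$; the other inclusion uses Rule~$\otimes$, the $\dna$ rule, Exchange1, and a single Cut.
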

\begin{proof}
By induction on the length of the formula $x$.
\begin{enumerate}
\item
Let \mbox{$ x = a $} for a propositional letter $a$. By construction we have
\mbox{$ S(x) = Pr(x)  $}.
\item
Let \mbox{$ x = {\bf 1}$}, by Axiom1 we have \mbox{$ {\bf 1} \in \cal Z $},
\mbox{$ \{ {\bf 1} \} \subseteq \cal Z $}, \mbox{$ {\cal Z}^\bot \subseteq Pr ( {\bf 1} ) $}. 
Assume, now, that \mbox{$y \in Pr({\bf 1})$}.
We have \mbox{$ \vdash {\bf 1} , y $}.
Let \mbox{$z \in \cal Z $}. We have \mbox{$ \vdash z $}. 
By Rule $\bot$ we have \mbox{$ \bot , z $}, i.e., \mbox{$ \vdash \mysim {\bf 1} , z $} 
and by the Cut rule we have \mbox{$ \vdash z , y $} and \mbox{$ y \in {\cal Z}^\bot $}.
We have now shown that \mbox{$ Pr({\bf 1}) \subseteq {\cal Z}^\bot $}.
\item
Let \mbox{$ x = \top$}. By Axiom$\top$, \mbox{$ Pr(\top) = M $}.
\item
Let \mbox{$ x = \mysim y $}. We have\mbox{$ S ( \mysim x ) = $}
\mbox{$ {S(x)}^\bot =$}  \mbox{$ {Pr(x)}^\bot = $} \mbox{$ Pr ( \mysim x ) $}
by Lemma~\ref{the:fact2}.
\item
Let \mbox{$ x = y \& z $}. We have \mbox{$ S ( y \& z ) =$} \mbox{$ S ( y ) \cap S(z) =$}
\mbox{$ Pr ( y ) \cap Pr ( z ) \subseteq $} \mbox{$ Pr ( y \& z ) $} by Rule\&.
For the converse inclusion, the proof is the classical one, see~\cite{Girard:87} for example:
one only needs to check that Exchange is not used.
\item
Let \mbox{$ x = y \oplus z $}. By duality with the previous case.
\item
Let \mbox{$ x = y \otimes z $}. 
We have \mbox{$ S ( y \otimes z ) = $} \mbox{$ { {( S ( y ) . S ( z ) )}^\bot}^\bot =$}
\mbox{$ { {( Pr ( y ) . Pr ( z ) )}^\bot}^\bot $}.

For any \mbox{$ u \in Pr(y) $}, \mbox{$ t \in Pr(z) $}, we have
\mbox{$ \vdash y , u $} and \mbox{$ \vdash z , t $}.
By Rule $\otimes$, then, we have \mbox{$ \vdash u , t , y \otimes z $},
\mbox{$ \vdash u \dna t , y \otimes z $}, by Exchange1 
\mbox{$ \vdash y \otimes z , u \dna t $} and 
\mbox{$ u \dna t \in Pr( y \otimes z ) $}.
We have proved that \mbox{$ Pr(y) \dna Pr(z) \subseteq Pr(y \otimes z) $}.
But \mbox{$ Pr(y) . Pr(z) =$} \mbox{$ Pr(y) \dna Pr(z) $}.
Therefore \mbox{$ { {( Pr(y) . Pr(z) )}^\bot}^\bot \subseteq$}
\mbox{$ { Pr( y \otimes z)^\bot}^\bot = Pr( y \otimes z) $}.
We have proved that \mbox{$ S( y \otimes z ) \subseteq $} \mbox{$ Pr(y \otimes z) $}.

Conversely, let \mbox{$ t \in {( Pr ( y ) . Pr ( z ) )}^\bot $}.
For any \mbox{$ u \in Pr(y) $}, \mbox{$ v \in Pr(z) $} we have
\mbox{$ \vdash u \dna v , t $} and therefore \mbox{$ \vdash Pr(y) \dna Pr(z) , t $}.
We see that \mbox{$ \vdash \mysim y \dna \mysim z ,  t $} and
\mbox{$ \vdash \mysim ( y \otimes z ) , t $}.
For any \mbox{$ w \in Pr ( y \otimes z ) $}, we have \mbox{$ \vdash y \otimes z , w $}.
By Cut we get \mbox{$ \vdash t , w $} and we conclude that 
\mbox{$ Pr ( y \otimes z ) \subseteq $} \mbox{$ Pr ( y ) . Pr ( z ) \subseteq $}
\mbox{$ { {( Pr ( y ) . Pr ( z ) )}^\bot}^\bot =$} \mbox{$ S ( y \otimes z ) $}.
\item
Let \mbox{$ x = y \dna z $}. By duality with the previous case.
\end{enumerate}
\QED \end{proof}

We can now conclude.
\begin{theorem} \label{the:completeness}
Any formula valid in any Q-structure and any assignment of facts to propositional letters 
is provable in the system of axioms and rules presented in Section~\ref{sec:proof_rules}.
\end{theorem}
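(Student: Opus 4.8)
The plan is to obtain completeness as an essentially immediate consequence of the syntactic Q-structure $M$ and the canonical assignment $Pr$ built above. Let $F \in {\cal L}$ be a formula valid in every Q-structure under every assignment of facts to propositional letters. I would first specialize this hypothesis to the Q-structure $M$, with the assignment sending each propositional letter $a$ to the fact $Pr(a) = \{a\}^\bot$ (a fact by Definition~\ref{def:Pr}). The clauses defining $S$ in Section~\ref{sec:completeness} are precisely the clauses that compute the interpretation of a formula from the interpretations of its immediate subformulas, matching Definitions~\ref{def:negation}, \ref{def:times}, \ref{def:par}, \ref{def:with} and~\ref{def:plus}; hence the interpretation of $F$ in $M$ under this assignment is exactly $S(F)$, which equals $Pr(F) = \{F\}^\bot$ by Lemma~\ref{the:Pr}.

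Now I would unwind what validity means here. Since the neutral element of $M$ is $\epsilon$, Definition~\ref{def:valid} says that $F$ valid in $M$ amounts to $\epsilon \in S(F) = \{F\}^\bot$, i.e.\ $\epsilon \bot F$, i.e.\ $\epsilon . F \in {\cal Z}$ by Definition~\ref{def:orth}; and $\epsilon . F = F$ because $\epsilon$ is neutral for the operation ``.'' of $M$. Hence $F \in {\cal Z}$, so by the definition of the garbage set of $M$ there is a sequence $\sigma \in {\cal L}^\ast$ with $\vdash \sigma$ provable and $\bar{\sigma} = F$. It remains to pass from $\vdash \sigma$ to $\vdash F$: if $\sigma$ has length one then $\sigma = \bar{\sigma} = F$ and we are done (the empty sequence cannot occur, the empty sequent not being derivable), while if $\sigma = A_{1}, \ldots, A_{n}$ with $n \geq 2$ then $n-1$ successive applications of Rule~${\bf \dna}$ turn $\vdash A_{1}, \ldots, A_{n}$ into $\vdash ( \ldots ( A_{1} \dna A_{2} ) \ldots ) \dna A_{n} = \bar{\sigma} = F$. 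Thus $F$ is provable, as required.

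I do not anticipate a real obstacle at this point: all the effort has already been spent in checking that $M$ satisfies conditions~\ref{symmetry} and~\ref{reverse} (via Exchange1 and Exchange2), in Lemma~\ref{the:fact2}, and above all in Lemma~\ref{the:Pr}, whose $\otimes$ and $\dna$ cases carry the genuine weight. The only points requiring a little care are the routine verification that the interpretation of $F$ under the canonical assignment coincides with $S(F)$, and the syntactic bookkeeping by which a provable sequence $\sigma$ yields the provable formula $\bar{\sigma}$ through Rule~${\bf \dna}$.
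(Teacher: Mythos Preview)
Your proof is correct and follows essentially the same route as the paper: specialize to the syntactic Q-structure $M$ with the canonical assignment $a \mapsto Pr(a)$, invoke Lemma~\ref{the:Pr} to identify the interpretation with $Pr(F)$, and read off provability from $\epsilon \in Pr(F)$. The only cosmetic difference is in the last step: the paper extracts $\vdash x$ via Axiom1 and Cut (treating $\epsilon$ as standing for $\mysim{\bf 1}$), whereas you unwind $\epsilon \in \{F\}^\bot$ to $F \in {\cal Z}$ directly and then collapse the witnessing sequent with Rule~${\bf \dna}$ --- arguably the cleaner of the two.
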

\begin{proof}
Any formula $x$ valid in the Q-structure $M$ satisfies \mbox{$ \epsilon \in S ( x ) $}
and therefore, by Lemma~\ref{the:Pr} \mbox{$ \vdash x , \epsilon $}.
But \mbox{$ \epsilon^\bot = {\bf 1} $} and, by Axiom1 and Cut we get
\mbox{$ \vdash x $}.
\QED \end{proof}

The present effort does not propose exponential connectives for Q-structures because no sound
and complete rules were found for the natural generalization of topolinear structures to
non-associative phase structures. 
In particular, the formula \mbox{$ ? A \dna ? B \multimapboth ? ( A \oplus B ) $}, central in
Girard's treatment, is not valid in such structures: Girard's use of the Exchange rule cannot
be circumvented by Exchange1 and Exchange2 .

\section{Projective structures} \label{sec:projective}
The Q-structures we have described present a sub-structural logic without Contraction or
Weakening and with a very limited Exchange rule.
It generalizes the Linear Logic of Girard's~\cite{Girard:87} and is satisfied 
by the baby quantum logic of Section~\ref{sec:baby}.
But some additional rules seem to be valid in Quantum Logic.
We shall study limited forms of Contraction and Weakening.

If one performs a measurement on a quantic system, it is a fundamental principle that
a second performance of the same measurement will not change the state of the system:
the same value will be obtained with probability one.
So it seems that the rule:
\[ \begin{array}{c}
\vdash \sigma , A , \tau \\
\hline
\vdash \sigma , A , A , \tau
\end{array} \]
should be valid.
This is a limited form of Weakening.

The inverse Contraction rule also seems to be valid.

If we understand the sequent 
\mbox{$ A_{1} , \ldots , A_{n} \vdash B_{1} , \ldots , B_{m} $} as meaning that
any state resulting from the sequence of measurements on the left satisfies the condition
described by the sequence on the right, we would expect that any extension on the left of
the left-hand side can only restrict the set of final states and therefore we expect the rule
\[ \begin{array}{c}
\sigma \vdash \tau  \\
\hline
A , \sigma \vdash \tau 
\end{array} \]
should be valid.
This is a limited form of Weakening.

Note that Baby Quantum Logic and our presentation of classical logic satisfy the rules above.

\begin{definition} \label{def:projective}
A Q-structure \mbox{$\langle {\cal P} , {\cal Z} , . , 1 \rangle $} is {\em projective} iff
the operation ``.'' absorbs into ${\cal Z}$:
for any \mbox{$ x \in {\cal Z} $} and any \mbox{$ y \in {\cal P} $}
one has \mbox{$ x . y \in {\cal Z} $}.
\end{definition}

\begin{lemma} \label{the:xx=x}
In a projective Q-structure, for any \mbox{$ x , y , z \in {\cal P} $}:
\begin{enumerate}
\item \label{idempotence}
\mbox{$ x \bot y $} iff \mbox{$ x . x \bot y $},
\item \label{ortho}
if \mbox{$ x \bot z $} and \mbox{$ y \bot z $}, then \mbox{$ x . y \bot z $}.
\end{enumerate}
\end{lemma}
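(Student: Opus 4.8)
The plan is to reduce both items to a couple of short manipulations with Conditions~\ref{symmetry} and~\ref{reverse} of Definition~\ref{def:Q-structure} together with the absorption property of Definition~\ref{def:projective}. The first step I would take is to record that in a projective Q-structure the garbage set ${\cal Z}$ absorbs on \emph{both} sides: projectivity gives ${\cal Z} . {\cal P} \subseteq {\cal Z}$ directly, and combining this with Condition~\ref{symmetry} (equivalently, with the symmetry of $\bot$, Lemma~\ref{the:orth}) yields ${\cal P} . {\cal Z} \subseteq {\cal Z}$ as well. With two-sided absorption available, item~\ref{ortho} and the forward implication of item~\ref{idempotence} both follow from one template: move the expression to the shape $x . (\,\cdot\,)$ by Condition~\ref{reverse}, recognise the inner factor as a member of ${\cal Z}$ by Condition~\ref{symmetry}, and then apply absorption.

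For item~\ref{ortho}: assuming $x \bot z$ and $y \bot z$, Condition~\ref{reverse} gives $(x . y) . z \in {\cal Z}$ iff $x . (z . y) \in {\cal Z}$; from $y \bot z$ we get $y . z \in {\cal Z}$, hence $z . y \in {\cal Z}$ by Condition~\ref{symmetry}, hence $x . (z . y) \in {\cal Z}$ by absorption, so $x . y \bot z$. (Only the hypothesis on $y$ is really used; the one on $x$ just makes the statement symmetric.) For the forward half of item~\ref{idempotence}: if $x \bot y$ then $x . y \in {\cal Z}$, so $y . x \in {\cal Z}$, so $x . (y . x) \in {\cal Z}$ by absorption, and Condition~\ref{reverse} rewrites this as $(x . x) . y \in {\cal Z}$, i.e.\ $x . x \bot y$.

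The remaining implication, $x . x \bot y \Rightarrow x \bot y$, is the one I expect to be the main obstacle. Unwinding the hypothesis by Condition~\ref{reverse} gives $x . (y . x) \in {\cal Z}$, and a further round of Conditions~\ref{symmetry} and~\ref{reverse} turns this into the equivalent statements $(y . x) . x \in {\cal Z}$ and $y . (x . x) \in {\cal Z}$; the target is $x . y \in {\cal Z}$, equivalently $y . x \in {\cal Z}$. So what this direction really asks for is a cancellation law: a product that ends in the factor $x$ and is orthogonal to $x$ must already lie in ${\cal Z}$. Bare absorption does not obviously supply this, so this is where I would push the projective hypothesis hardest and where the argument most needs care. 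A clean way to close it --- the one that works in the intended models such as baby quantum logic, where $x . x = x$ --- is to establish $\{x . x\}^\bot = \{x\}^\bot$, after which the converse is immediate; pinning down that this identity (or at least the inclusion $\{x . x\}^\bot \subseteq \{x\}^\bot$) is forced by projectivity is, I believe, the crux of the lemma.
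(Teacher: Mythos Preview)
Your treatment of item~\ref{ortho} and of the forward implication in item~\ref{idempotence} is correct and matches the paper's argument almost line for line: for item~\ref{ortho} the paper reasons exactly as you do (from $y \bot z$ get $z.y \in {\cal Z}$, absorb to $x.(z.y) \in {\cal Z}$, then apply Condition~\ref{reverse}), and for the forward half of item~\ref{idempotence} it runs $y.x \in {\cal Z} \Rightarrow (y.x).x \in {\cal Z} \Rightarrow y.(x.x) \in {\cal Z}$, which is your template in a slightly different order. Your remark that only the hypothesis $y\bot z$ is actually used in item~\ref{ortho} is also correct.

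Your instinct about the converse implication in item~\ref{idempotence} is not only justified, it is sharper than the paper: the paper's proof handles \emph{only} the direction $x \bot y \Rightarrow x.x \bot y$ and says nothing about the reverse. In fact the reverse implication does not follow from the axioms of a projective Q-structure. Consider ${\cal P} = \{1, a, 0\}$ with ${\cal Z} = \{0\}$, $1$ neutral, $0$ absorbing on both sides, and $a.a = 0$. The operation is commutative, and the only nontrivial instance of Condition~\ref{reverse} is $(a.a).a = 0 = a.(a.a)$, so this is a projective Q-structure. Here $a.a = 0 \bot 1$, yet $a.1 = a \notin {\cal Z}$, so $a \not\bot 1$. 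Thus the inclusion $\{x.x\}^\bot \subseteq \{x\}^\bot$ that you (rightly) single out as the crux is not forced by projectivity alone; the ``iff'' in the statement is an overstatement, and only the forward direction is actually established.
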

Property~\ref{idempotence} essentially means \mbox{$ x . x = x $},
and hence the term {\em projective}.
It expresses the idea that combining something with itself leaves the situation 
essentially unchanged.
Property~\ref{ortho} expresses the idea that if both $x$ and $y$ are incompatible with $z$, 
the combination $ x . y $ must also be incompatible with $z$. 
Note that our baby quantum logic is projective, and so is our presentation of classical logic in
Section~\ref{sec:Q-structures}.

\begin{proof}
\begin{enumerate}
\item
\mbox{$ x \bot y $} iff \mbox{$ y . x \in {\cal Z} $} implies \mbox{$ ( y . x) . x \in {\cal Z} $}
since the structure is projective.
But \mbox{$ ( y . x) . x \in {\cal Z} $} implies \mbox{$ y . ( x . x ) \in {\cal Z} $}
by Definition~\ref{def:Q-structure}.
\item
\mbox{$ z . y \in {\cal Z} $} implies \mbox{$ x . ( z . y ) \in {\cal Z} $} which implies
\mbox{$ ( x . y ) . z \in {\cal Z} $}.
\end{enumerate}
\QED \end{proof}

Our next result presents basic properties of projective Q-structures.
\begin{lemma} \label{the:projective}
In a projective Q-structure:
\begin{enumerate}
\item \label{sub}
for any \mbox{$ A \subseteq {\cal P} $}, \mbox{$ A \subseteq {( A . A )^\bot}^\bot $},
\item \label{subbot}
for any \mbox{$ A \subseteq {\cal P} $}, \mbox{$ A^\bot . A^\bot \subseteq A^\bot $},
\item \label{equal}
for any fact $F$, \mbox{$ F . F \subseteq F $}, \mbox{$ F \otimes F = F $} and 
\mbox{$ F \dna F = F $},
\item \label{1P}
\mbox{$ {\bf 1} = {\cal P} $} and \mbox{$ {\cal Z} = {\bf 0} $},
\item \label{ZA}
for any fact $F$, one has \mbox{$ {\cal Z} \subseteq F $}.
\end{enumerate}
\end{lemma}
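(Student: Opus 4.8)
The plan is to prove the five items in the order stated, since each leans on its predecessors, and to push essentially everything back to the two facts already recorded in Lemma~\ref{the:xx=x}: that $x$ and $x . x$ have the same orthogonality behaviour, and that orthogonality to a fixed element is preserved under ``.''.

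For item~\ref{sub}, given $x \in A$ and any $w \in (A . A)^\bot$ I would argue: $w \bot (x . x)$ since $x . x \in A . A$, hence $(x . x) \bot w$ by symmetry of $\bot$, hence $x \bot w$ by Lemma~\ref{the:xx=x} item~\ref{idempotence}; as $w$ was arbitrary, $x \in (A . A)^{\bot\bot}$. For item~\ref{subbot}, take $x , y \in A^\bot$ and an arbitrary $a \in A$; then $x \bot a$ and $y \bot a$, so Lemma~\ref{the:xx=x} item~\ref{ortho} gives $x . y \bot a$, and since $a$ ranges over all of $A$ this says exactly $x . y \in A^\bot$.

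Item~\ref{equal} is then bookkeeping. Applying item~\ref{subbot} with $A = F^\bot$ and using $F^{\bot\bot} = F$ ($F$ being a fact) gives $F . F \subseteq F$; feeding this through Lemma~\ref{the:perp} and Definition~\ref{def:times} yields $F \otimes F = (F . F)^{\bot\bot} \subseteq F^{\bot\bot} = F$, while item~\ref{sub} with $A = F$ gives the reverse inclusion $F \subseteq (F . F)^{\bot\bot} = F \otimes F$; hence $F \otimes F = F$. For par I would not go through Definition~\ref{def:par} directly but use the duality $F \dna F = \mysim ( \mysim F \otimes \mysim F )$ noted right after that definition, together with $F^\bot \otimes F^\bot = F^\bot$ (just proved, $F^\bot$ being a fact), to get $F \dna F = (F^\bot)^\bot = F$. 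For item~\ref{1P}, ${\bf 1} = {\cal Z}^\bot \subseteq {\cal P}$ is trivial; conversely, for $x \in {\cal P}$ and $z \in {\cal Z}$, projectivity gives $z . x \in {\cal Z}$, hence $x . z \in {\cal Z}$ by condition~\ref{symmetry}, so $x \bot z$ and $x \in {\cal Z}^\bot = {\bf 1}$; thus ${\bf 1} = {\cal P}$, and then ${\bf 0} = {\cal P}^\bot = {\bf 1}^\bot = {\cal Z}^{\bot\bot} = {\cal Z}$, using that ${\cal Z} = \{ 1 \}^\bot$ is a fact (Lemma~\ref{the:fact}). Item~\ref{ZA} then follows from item~\ref{1P} and the fact that ${\bf 0}$ lies in every fact (Lemma~\ref{the:fact}, Lemma~\ref{the:zero}); alternatively, directly: for $z \in {\cal Z}$ and a fact $F = A^\bot$, projectivity gives $z . a \in {\cal Z}$ for every $a \in A$, so $z \in A^\bot = F$.

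There is no genuine obstacle here; the only spot calling for care is item~\ref{equal}, where one has to keep track of which way each inclusion runs and observe that the par identity drops out most cleanly via the $\otimes$/$\dna$ duality rather than from the defining formula for $\dna$. Every remaining step is a one-line invocation of Lemma~\ref{the:xx=x}, Lemma~\ref{the:perp}, condition~\ref{symmetry}, or the identity ${\cal Z} = \{ 1 \}^\bot$.
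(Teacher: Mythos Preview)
Your proof is correct and follows essentially the same route as the paper's: items~\ref{sub} and~\ref{subbot} are reduced to the two parts of Lemma~\ref{the:xx=x}, item~\ref{equal} is assembled from these via the $\otimes$/$\dna$ duality, and items~\ref{1P} and~\ref{ZA} come straight from the absorption property of Definition~\ref{def:projective}. Your write-up is in fact more explicit than the paper's (which, for instance, disposes of item~\ref{1P} in a single clause), but the underlying argument is the same.
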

\begin{proof}
\begin{enumerate}
\item
By property~\ref{idempotence} of Lemma~\ref{the:xx=x}:
for any \mbox{$ a \in A $}, \mbox{$ a \in { \{ a \}^\bot}^\bot $} and
\mbox{$ { a . a \in \{ a \}^\bot}^\bot $}.
We see that \mbox{$ A \subseteq {( A . A )^\bot}^\bot $}.
\item
By property~\ref{ortho} in Lemma~\ref{the:xx=x}.
\item
By item~\ref{subbot} \mbox{$ {F^\bot}^\bot . {F^\bot}^\bot \subseteq {F^\bot}^\bot $}
and therefore \mbox{$ F . F \subseteq F $}.
We see that \mbox{$ F \otimes F \subseteq $} \mbox{$ {F^\bot}^\bot = F $}.
By item~\ref{sub} \mbox{$ F \subseteq F \otimes F $}.
By duality, one easily sees that \mbox{$ F \dna F = F $}.
\item
By the absorption property \mbox{$ {\cal Z}^\bot = {\cal P} $}.
\item
For any \mbox{$ A \subseteq {\cal P } $}, by the absorption property, one has
\mbox{$ {\cal Z} \subseteq A^{\bot} $}.
\end{enumerate}
\QED \end{proof}

\begin{lemma} \label{the:main_proj}
In a projective Q-structure, for any \mbox{$ A , B \subseteq {\cal P} $},
one has \mbox{$ A^\bot \subseteq ( B . A )^\bot $} and 
\mbox{$ A \subseteq ( B . A^\bot)^\bot $}.
\end{lemma}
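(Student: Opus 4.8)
The statement to prove is Lemma~\ref{the:main_proj}: in a projective Q-structure, for any $A, B \subseteq {\cal P}$, one has $A^\bot \subseteq (B . A)^\bot$ and $A \subseteq (B . A^\bot)^\bot$.

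The plan is to prove the first inclusion directly from the definitions and the projective absorption property, then derive the second inclusion as a formal consequence. For the first inclusion, I would take an arbitrary $x \in A^\bot$ and an arbitrary element $b . a$ of $B . A$ (with $b \in B$, $a \in A$), and show $(b . a) . x \in {\cal Z}$. Since $x \in A^\bot$ and $a \in A$, we have $a . x \in {\cal Z}$, hence by symmetry (Definition~\ref{def:Q-structure}, item~\ref{symmetry}) also $x . a \in {\cal Z}$. Now the projective property lets ${\cal Z}$ absorb on the right: $(x . a) . ? $ — but here I need to be careful about which form of reverse/absorption applies. The cleanest route: from $x . a \in {\cal Z}$, absorption gives $(x . a) . b \in {\cal Z}$; then by condition~\ref{reverse} of Definition~\ref{def:Q-structure}, $(x . a) . b \in {\cal Z}$ iff $x . (b . a) \in {\cal Z}$, and by symmetry $x . (b . a) \in {\cal Z}$ iff $(b . a) . x \in {\cal Z}$. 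That gives $x \in (B . A)^\bot$, as desired.

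For the second inclusion, I would apply the first inclusion with $A$ replaced by $A^\bot$: this yields $(A^\bot)^\bot \subseteq (B . A^\bot)^\bot$, and since $A \subseteq (A^\bot)^\bot$ by Lemma~\ref{the:perp}, we get $A \subseteq (B . A^\bot)^\bot$ immediately. Alternatively one can redo the direct argument, but the substitution trick is shorter and cleaner.

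The main obstacle is purely bookkeeping: making sure the chain of rewrites using symmetry and condition~\ref{reverse} is applied in a valid order, and that the absorption of ${\cal Z}$ is invoked on the correct side (Definition~\ref{def:projective} gives absorption only as $x . y \in {\cal Z}$ for $x \in {\cal Z}$, i.e.\ on the \emph{left} factor, so I must arrange the expression so that a ${\cal Z}$-element sits in the left position before absorbing). There is no genuine conceptual difficulty — it is the projective analogue of the kind of manipulation already carried out in the proof of Lemma~\ref{the:product} and Lemma~\ref{the:xx=x}.
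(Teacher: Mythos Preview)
Your argument is correct and, for the first inclusion, coincides with the paper's proof: the paper also takes $x\in A^\bot$, $y\in A$, $z\in B$, uses $x.y\in{\cal Z}$, absorbs to $(x.y).z\in{\cal Z}$, and applies condition~\ref{reverse} to obtain $x.(z.y)\in{\cal Z}$. (Your extra invocations of symmetry at the start and end are harmless detours: $x\in A^\bot$ already gives $x.a\in{\cal Z}$ by definition, and $x.(b.a)\in{\cal Z}$ already says $x\in(B.A)^\bot$.)

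For the second inclusion the paper simply reruns the same direct computation with the roles of $x$ and $y$ swapped, whereas you instead substitute $A\mapsto A^\bot$ in the first inclusion and use $A\subseteq A^{\bot\bot}$ from Lemma~\ref{the:perp}. Your route is slightly slicker and avoids repeating the elementwise manipulation; the paper's route has the minor advantage of not needing Lemma~\ref{the:perp} at all. Either way the content is the same.
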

\begin{proof}
Let \mbox{$ x \in A^\bot , y \in A , z \in B $}.
We have \mbox{$ x . y \in {\cal Z} $} and therefore, by absorption, 
\mbox{$ ( x . y ) . z \in {\cal Z} $} and \mbox{$ x . ( z . y ) \in {\cal Z} $}.
We conclude that \mbox{$ A^\bot \subseteq ( B . A )^\bot $}.
Similarly \mbox{$ y . x \in {\cal Z} $}, \mbox{$ ( y . x ) . z \in {\cal Z} $},
\mbox{$ y . ( z . x ) \in {\cal Z} $} and \mbox{$ A \subseteq ( B . A^\bot )^\bot $}.
\QED \end{proof}

\begin{corollary} \label{the:notimes}
For any \mbox{$ A \subseteq {\cal P} $} and any facts $F$ , $G$: 
\mbox{$ A . F \subseteq F $}, \mbox{$ G \otimes F \subseteq F $} and
\mbox{$ F \subseteq G \dna F $}.
\end{corollary}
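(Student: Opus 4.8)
The plan is to read off all three inclusions directly from Lemma~\ref{the:main_proj}, the definitions of $\otimes$ and $\dna$, and the closure/antitonicity facts collected in Lemma~\ref{the:perp}. The only thing to keep straight is that the generic subsets called $A$ and $B$ in Lemma~\ref{the:main_proj} will here be instantiated by the fact $F$ (or by $F^\bot$) and by the corollary's $A$ (or by $G^\bot$); and that, since $F$ is a fact, we may replace ${F^\bot}^\bot$ by $F$ whenever convenient.

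For $A . F \subseteq F$: apply the first inclusion of Lemma~\ref{the:main_proj} with its ``$A$'' taken to be $F$ and its ``$B$'' taken to be the corollary's $A$, which gives $F^\bot \subseteq {( A . F )}^\bot$. By the second item of Lemma~\ref{the:perp} this yields ${( A . F )^\bot}^\bot \subseteq {F^\bot}^\bot = F$, and since $A . F \subseteq {( A . F )^\bot}^\bot$ by the first item of Lemma~\ref{the:perp}, we get $A . F \subseteq F$. For $G \otimes F \subseteq F$: by Definition~\ref{def:times} we have $G \otimes F = {( G . F )^\bot}^\bot$; the same application of Lemma~\ref{the:main_proj}, now with ``$B$''$ = G$, gives $F^\bot \subseteq {( G . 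F )}^\bot$, whence $G \otimes F \subseteq {F^\bot}^\bot = F$ exactly as before.

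For $F \subseteq G \dna F$: by Definition~\ref{def:par} we have $G \dna F = {( G^\bot . F^\bot )}^\bot$, and the second inclusion of Lemma~\ref{the:main_proj}, taken with its ``$A$''$= F$ and its ``$B$''$= G^\bot$, states precisely $F \subseteq {( G^\bot . F^\bot )}^\bot$. There is essentially no obstacle here: all the content lies in Lemma~\ref{the:main_proj} (itself an immediate consequence of absorption into $\cal Z$), and the only care needed is the bookkeeping of which set plays the role of $A$ and which plays the role of $B$ in that lemma, together with invoking ``$F$ is a fact'' at the right moment to collapse ${F^\bot}^\bot$ to $F$.
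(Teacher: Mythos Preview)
Your argument is correct and follows the same line as the paper's: invoke Lemma~\ref{the:main_proj} with the fact $F$ in the role of the lemma's ``$A$'', then use antitonicity of $(\cdot)^\bot$ and the fact that $F$ is closed. The only cosmetic difference is that for $F \subseteq G \dna F$ the paper simply says ``by duality'' (i.e.\ it dualizes $G' \otimes F' \subseteq F'$ with $G' = G^\bot$, $F' = F^\bot$), whereas you instead read it off directly from the second inclusion of Lemma~\ref{the:main_proj}; these are equivalent one-line observations.
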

\begin{proof}
By lemma~\ref{the:main_proj}, \mbox{$ F^\bot \subseteq ( A . F )^\bot $}
and therefore \mbox{$ { ( A . F )^\bot}^\bot \subseteq F $} and 
\mbox{$ G \otimes F \subseteq F $}.
Last claim is proved by duality.
\QED \end{proof}

Our last result shows that the Contraction rule discussed above is valid.
\begin{lemma} \label{the:contraction}
In a projective Q-structure, for any facts $F$, $G$, one has 
\mbox{$ ( G \dna F ) \dna F \subseteq$} \mbox{$ G \dna F $}.
\end{lemma}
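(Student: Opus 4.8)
The plan is to pass to the dual connective $\otimes$ and then reduce everything to the projective idempotency of ``.''. Concretely, I would first isolate the following auxiliary inclusion: in any projective Q-structure, for all facts $A$ and $B$ one has $A \otimes B \subseteq (A \otimes B) \otimes B$. Granting this and applying it with $A = \mysim G$ and $B = \mysim F$ (which are facts by Definition~\ref{def:negation}), one negates both sides; since $X \dna Y = \mysim(\mysim X \otimes \mysim Y)$, linear negation is involutive (Lemma~\ref{the:involution}) and reverses inclusions among facts (Lemma~\ref{the:perp}), the negated inclusion is precisely $(G \dna F) \dna F \subseteq G \dna F$, the desired statement.

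To prove the auxiliary inclusion I would make both sides explicit. By Definition~\ref{def:times}, $A \otimes B = (A . B)^{\bot\bot}$, and by Lemma~\ref{the:assoc}, $(A \otimes B) \otimes B = ((A . B) . B)^{\bot\bot}$, which in particular is a fact. Hence it suffices to establish the plain set inclusion $A . B \subseteq ((A . B) . B)^{\bot\bot}$: a fact containing $A . B$ automatically contains its bidual $(A . B)^{\bot\bot} = A \otimes B$, by Lemma~\ref{the:perp} and the definition of a fact.

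The heart of the argument is then two short steps combined by transitivity. First, since $B$ is a fact, Corollary~\ref{the:notimes}, which is the only place projectivity enters, gives $A . B \subseteq B$; by monotonicity of ``.'' in its second argument this yields $(A . B) . (A . B) \subseteq (A . B) . B$, and hence $((A . B) . (A . B))^{\bot\bot} \subseteq ((A . B) . B)^{\bot\bot}$ by Lemma~\ref{the:perp}. Second, item~\ref{sub} of Lemma~\ref{the:projective}, applied to the set $A . B \subseteq {\cal P}$, gives $A . B \subseteq ((A . B) . (A . B))^{\bot\bot}$. Chaining the two inclusions produces $A . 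B \subseteq ((A . B) . B)^{\bot\bot}$, which completes the proof.

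I do not expect a genuine obstacle, since everything here is bookkeeping with the $\bot$ operator; the one point that needs care is the opening reduction, where one must verify that in $(G \dna F) \dna F$ both the outer repeated occurrence of $F$ and the inner one pass to $\mysim F = B$ with the parenthesization preserved, and that every intermediate object ($G \dna F$, $\mysim G \otimes \mysim F$, and so on) really is a fact so that $\mysim$ acts as an order-reversing involution on it. A secondary thing to watch is that Corollary~\ref{the:notimes} is invoked in its set form $A . F \subseteq F$, rather than one of its $\otimes$/$\dna$ reformulations, because it is that set-level inclusion which gets iterated.
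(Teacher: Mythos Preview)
Your argument is correct, but it takes a longer route than the paper's. The paper stays entirely at the $\dna$ level: from Corollary~\ref{the:notimes} it uses the inclusion $F \subseteq G \dna F$, then monotonicity of $\dna$ in its second argument gives $(G \dna F) \dna F \subseteq (G \dna F) \dna (G \dna F)$, and finally the idempotency $H \dna H = H$ from Lemma~\ref{the:projective} (item~\ref{equal}) closes the argument in one line.

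You instead dualize to $\otimes$, invoke Lemma~\ref{the:assoc} to rewrite $(A \otimes B) \otimes B$ as $((A . B) . B)^{\bot\bot}$, and then descend to the ``.'' level, combining the set-form $A . B \subseteq B$ of Corollary~\ref{the:notimes} with item~\ref{sub} of Lemma~\ref{the:projective}. This is sound, and it has the minor virtue of making explicit where the projective idempotency of ``.'' enters; but it re-derives, via several set-level inclusions, what the paper already packaged as $H \dna H = H$. In short: same ingredients, but the paper applies them after they have been bundled into $\dna$-level identities, while you unpack everything back to ``.'' and $\bot$.
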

\begin{proof}
By Corollary~\ref{the:notimes} \mbox{$ F \subseteq G \dna F $} and therefore
\mbox{$ ( G \dna F ) \dna F \subseteq $} \mbox{$ ( G \dna F ) \dna ( G \dna F )$}.
Lemma~\ref{the:projective}, then, shows that 
\mbox{$ ( G \dna F ) \dna F \subseteq $} \mbox{$ G \dna F $}.
\QED \end{proof}

\section{Rule for the projective case} \label{sec:rules_projective}

The added rule, that expresses the essence of our requirement that 
$\cal Z$ be absorbing is a Weakening rule.
\[ \begin{array}{c}
\vdash A   \\
\hline
\vdash \ A , B
\end{array} {\bf \ \ WR \ \ } (Right-Weakening) \]
Soundness follows from the following.
\begin{lemma} \label{the:WR}
For any facts $A$, $B$, if \mbox{$ {\bf 1} \subseteq A $}, then
\mbox{$ {\bf 1} \subseteq A \dna B $}.
\end{lemma}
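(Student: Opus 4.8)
The plan is to reduce the claim to a single containment of facts and then cash in the degeneracy that projectivity imposes on $\bf 1$. First I would invoke Corollary~\ref{the:1par}: since $A$ and $B$ are facts, ${\bf 1} \subseteq A \dna B$ holds if and only if $B^\bot \subseteq A$. So it suffices to prove $B^\bot \subseteq A$.

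The one substantive ingredient is then Lemma~\ref{the:projective}, item~\ref{1P}, which tells us that in a projective Q-structure ${\bf 1} = {\cal P}$. Since $B^\bot$ is by definition a subset of $\cal P$, we get $B^\bot \subseteq {\cal P} = {\bf 1}$, and the hypothesis ${\bf 1} \subseteq A$ yields $B^\bot \subseteq {\bf 1} \subseteq A$, as required. Soundness of {\bf WR} then follows at once by Definition~\ref{def:valid}.

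There is no genuine obstacle here; the only point worth making explicit is that, in the projective setting, the hypothesis ${\bf 1} \subseteq A$ is very strong — combined with $A \subseteq {\cal P}$ it forces $A = {\cal P}$ — which is precisely why an extra formula $B$ may be adjoined on the right for free. If one prefers a self-contained derivation not routed through Corollary~\ref{the:1par}, this can be spelled out directly: from $A = {\cal P}$ we get $A \dna B = {\cal P} \dna B = {\cal P}^\bot \multimap B = {\bf 0} \multimap B$; since ${\bf 0} \subseteq B$ for every fact by Lemma~\ref{the:fact}, Lemma~\ref{the:implication} gives that ${\bf 0} \multimap B$ is valid, i.e. ${\bf 1} \subseteq A \dna B$.
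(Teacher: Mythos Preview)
Your proof is correct, but it takes a different route from the paper's. The paper argues directly from the defining absorption property of a projective structure: from ${\bf 1} \subseteq A$ one has $A^\bot \subseteq {\cal Z}$, hence $A^\bot . B^\bot \subseteq {\cal Z} . B^\bot \subseteq {\cal Z}$ by absorption, and taking orthogonals gives ${\bf 1} = {\cal Z}^\bot \subseteq (A^\bot . B^\bot)^\bot = A \dna B$. You instead route through the consequence ${\bf 1} = {\cal P}$ (Lemma~\ref{the:projective}, item~\ref{1P}) together with the characterization of Corollary~\ref{the:1par}. Both arguments are short and ultimately rest on absorption; the paper's version is self-contained from Definition~\ref{def:projective}, while yours cleanly reuses already-established lemmas and makes explicit the pleasant degeneracy that in the projective setting a valid fact must equal $\cal P$. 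Your alternative derivation via ${\cal P} \dna B = {\bf 0} \multimap B$ is also fine.
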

\begin{proof}
\mbox{$ {\bf 1} \subseteq A $} is equivalent to \mbox{$ A^\bot \subseteq {\cal Z} $},
which implies \mbox{$ A^\bot . B^\bot \subseteq$} 
\mbox{$ {\cal Z} . B^\bot \subseteq {\cal Z} $}
by the absorption property of Definition~\ref{def:projective}.
We conclude that \mbox{$ {\bf 1} =$} \mbox{$ {\cal Z}^\bot \subseteq$}
\mbox{$ A \dna B $}.
\QED \end{proof}

\begin{theorem} \label{the:exp}
The system of the eleven rules and axioms in Section~\ref{sec:proof_rules} and the added
WR rule is sound and complete for projective Q-structures phase semantics.
\end{theorem}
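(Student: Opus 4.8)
The plan is to treat soundness and completeness separately, in each case reusing the work of Sections~\ref{sec:proof_rules} and~\ref{sec:completeness} and supplying only what the absorbing condition on $\cal Z$ forces.

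Soundness I would prove by induction on the length of a proof, exactly as in Theorem~\ref{the:soundness}. The eleven original axioms and rules are sound in \emph{every} Q-structure, hence a fortiori in every projective one, so only the new rule \textbf{WR} requires a fresh case. If $\vdash A$ is valid in a projective Q-structure then ${\bf 1} \subseteq A$ by Definition~\ref{def:valid}; Lemma~\ref{the:WR} then gives ${\bf 1} \subseteq A \dna B$, which is precisely the validity of $\vdash A , B$. This closes the induction.

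For completeness I would re-run the construction of Section~\ref{sec:completeness}, now reading $\vdash$ as provability in the augmented calculus (the eleven rules together with \textbf{WR}). The carrier $M = {\cal L} \cup \{ \epsilon \}$, the operation ``.'', and the garbage set ${\cal Z} = \{ \bar\sigma \mid \vdash \sigma \}$ are defined verbatim. One first checks that $M$ is still a Q-structure: the verifications of conditions~\ref{symmetry} and~\ref{reverse} relied only on \textbf{Exchange1} and \textbf{Exchange2}, which are still present, and they are ``iff'' statements unaffected by the addition of further rules. The genuinely new point is that $M$ is now \emph{projective}. Let $x \in \cal Z$ and $y \in {\cal P}$. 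Writing $x = \bar\sigma$ with $\vdash \sigma$, repeated uses of Rule~$\dna$ yield $\vdash x$; if $y = \epsilon$ then $x . \epsilon = x \in \cal Z$, while if $y \in \cal L$ then $x . y = x \dna y$ and one application of \textbf{WR} to $\vdash x$ gives $\vdash x , y$, so that $x . y$ is $\bar\tau$ for the provable two-element sequence $\tau$ formed from $x$ and $y$, hence $x . y \in \cal Z$. Thus ``.'' absorbs into $\cal Z$ and $M$ is a projective Q-structure.

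Finally, I would observe that Lemmas~\ref{the:fact2} and~\ref{the:Pr} carry over: their proofs invoke only the logical axiom, Cut, Axiom1, Rule~$\bot$, Rule~$\&$, Rule~$\otimes$ and \textbf{Exchange1}, all still available, together with the fact — just re-established — that $M$ is a Q-structure, so for every formula $x$ one still has $S(x) = Pr(x)$ in this $M$. The argument of Theorem~\ref{the:completeness} then applies: a formula $x$ valid in every projective Q-structure is in particular valid in $M$, so $\epsilon \in S(x) = Pr(x) = \{ x \}^\bot$, i.e. $x = \epsilon . x \in \cal Z$, i.e. $\vdash x$. I expect the only delicate point to be bookkeeping: since enlarging $\vdash$ by \textbf{WR} changes $M$, one must re-examine rather than merely cite the claims ``$M$ is a Q-structure'' and ``$S(x) = Pr(x)$''; the essential new use of \textbf{WR} is exactly the proof that $\cal Z$ is absorbing, and everything else is inherited from Section~\ref{sec:completeness}.
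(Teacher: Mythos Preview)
Your proposal is correct and follows essentially the same approach as the paper: reuse soundness from Section~\ref{sec:proof_rules} together with Lemma~\ref{the:WR} for the new rule, and for completeness re-run the construction of Section~\ref{sec:completeness} with provability in the enlarged calculus, the only new ingredient being the verification that the canonical structure $M$ is projective via \textbf{WR}. You are in fact more careful than the paper in spelling out the case $y = \epsilon$, the passage from $x \in {\cal Z}$ to $\vdash x$ via Rule~$\dna$, and the need to re-check rather than merely cite the lemmas of Section~\ref{sec:completeness}.
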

\begin{proof}
Soundness has been proved on the way.
For completeness, we shall use the technique used in Section~\ref{sec:completeness}.

Our only task is to show that, under the new rules, the Q-structure built is projective.
This is guaranteed by the Right-Weakening rule WR.
We want to show that, if \mbox{$ A , B \in {\cal Z} $}, one has \mbox{$ A . B \in {\cal Z} $},
i.e., \mbox{$ A \dna B \in {\cal Z} $}.
Assume \mbox{$ \vdash A $}, by WR we have \mbox{$ \vdash A , B $}, 
\mbox{$ \vdash A\dna B $} and \mbox{$ A . B \in {\cal Z} $}.
\end{proof}

We can now conclude.
\begin{theorem} \label{the:completeness}
Any formula valid in any projective Q-structure and 
any assignment of facts to propositional letters is provable in the system of axioms and rules 
presented in Sections~\ref{sec:proof_rules} and~\ref{sec:rules_projective}.
\end{theorem}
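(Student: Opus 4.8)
The plan is to re-run the completeness argument of Section~\ref{sec:completeness} essentially verbatim, inserting the single extra ingredient already isolated in the proof of Theorem~\ref{the:exp}: once the rule WR is available, the universal Q-structure $M$ built there is \emph{projective}. First I would recall the construction of $M$ from Section~\ref{sec:completeness}: its carrier is ${\cal L}\cup\{\epsilon\}$, the operation ``.'' is $\dna$ on ${\cal L}$ with $\epsilon$ adjoined as a neutral element, and the garbage set is ${\cal Z}=\{\bar\sigma\mid\sigma\in{\cal L}^\ast,\ \vdash\sigma\}$, so that $x\bot y$ iff $\vdash x,y$ for $x,y\in{\cal L}$. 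Conditions~\ref{symmetry} and~\ref{reverse} of Definition~\ref{def:Q-structure} still hold, by the Exchange1 and Exchange2 rules of Section~\ref{sec:exchange_rule}, since those rules are retained.

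Next I would check that this $M$ is projective, i.e.\ that ``.'' absorbs into ${\cal Z}$. The only nontrivial case is $x,y\in{\cal L}$: if $x\in{\cal Z}$ then $\vdash x$ (any $\bar\sigma$ with $\vdash\sigma$ is a formula, provable by repeated use of the $\dna$ rule), so WR gives $\vdash x,y$, hence $\vdash x\dna y$, hence $x.y=x\dna y\in{\cal Z}$; the cases involving $\epsilon$ are immediate from neutrality. Then I would observe that every step of the interpretation argument --- Lemma~\ref{the:fact2}, Lemma~\ref{the:Pr}, and the identities $Pr(\mysim x)=(Pr(x))^\bot$, $S(x)=Pr(x)$, $Pr(y).Pr(z)=Pr(y)\dna Pr(z)$ --- uses only the eleven axioms and rules of Section~\ref{sec:proof_rules} together with the \emph{form} of $M$'s definition (orthogonality $=$ provability of the two-element sequent), none of which is disturbed by adding WR. Adding WR only enlarges the set of provable sequents, hence only enlarges ${\cal Z}$ and the sets $Pr(x)$, but the induction establishing $S(x)=Pr(x)$ re-runs unchanged for the new, larger ${\cal Z}$; so $S(x)=Pr(x)$ holds in $M$ for every formula $x$.

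Finally I would conclude exactly as in Section~\ref{sec:completeness}. Suppose $x$ is valid in every projective Q-structure under every assignment of facts to propositional letters. In particular it is valid in $M$ --- which we have just shown to be projective --- under the assignment $a\mapsto Pr(a)$; hence $\epsilon\in S(x)=Pr(x)$, i.e.\ $\vdash x,\epsilon$. Since $\epsilon^\bot={\bf 1}$, Axiom1 and the Cut rule yield $\vdash x$. The part requiring the most care is the bookkeeping in the previous paragraph: one must confirm that none of the fact-level facts used in Lemmas~\ref{the:fact2} and~\ref{the:Pr} (that $S(x)$ is a fact, that $Pr$ commutes with the connectives, etc.) silently relied on $M$ \emph{not} being projective. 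They do not --- they are consequences of the retained axioms and rules plus the syntactic shape of $M$ --- so the whole argument goes through, and the added WR contributes only the projectivity of $M$.
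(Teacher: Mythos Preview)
Your proposal is correct and follows essentially the same route as the paper: rebuild the syntactic Q-structure $M$ of Section~\ref{sec:completeness}, invoke WR to show it is projective (this is exactly the content isolated in the proof of Theorem~\ref{the:exp}), observe that Lemmas~\ref{the:fact2} and~\ref{the:Pr} go through unchanged, and finish via $\epsilon\in S(x)=Pr(x)$. Your write-up is in fact more careful than the paper's, which compresses all of this into two lines citing Theorem~\ref{the:exp}; in particular your remark that enlarging ${\cal Z}$ does not disturb the induction in Lemma~\ref{the:Pr} is a point the paper leaves implicit.
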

\begin{proof}
Any formula $x$ valid in any projective Q-structure $M$ satisfies 
\mbox{$ \epsilon \in S ( x ) $}
and therefore, by Lemma~\ref{the:exp} \mbox{$ \vdash x , \epsilon $}.
But \mbox{$ \epsilon^\bot = {\bf 1} $} and, by Axiom1 and Cut we get
\mbox{$ \vdash x $}.
\QED \end{proof}

\bibliographystyle{plain}

\begin{thebibliography}{1}

\bibitem{BirkvonNeu:36}
Garrett Birkhoff and John von Neumann.
\newblock The logic of quantum mechanics.
\newblock {\em Annals of Mathematics}, 37:823--843, 1936.

\bibitem{Girard:87}
Jean-Yves Girard.
\newblock Linear logic.
\newblock {\em Theoretical Computer Science}, 50:1--102, 1987.

\bibitem{Girard:McGill}
Jean-Yves Girard.
\newblock Seminar lectures.
\newblock Le Groupe Interuniversitaire en Etudes Categoriques, McGill
  University, Montreal, November 1987.

\bibitem{Lehmann_andthen:JLC}
Daniel Lehmann.
\newblock A presentation of quantum logic based on an "and then" connective.
\newblock {\em Journal of Logic and Computation}, 18(1):59--76, February 2008.
\newblock doi: 10.1093/logcom/exm054.

\bibitem{Yetter:LL}
David~N. Yetter.
\newblock Quantales and (noncommutative) linear logic.
\newblock {\em The Journal of Symbolic Logic}, 55(1):41--64, March 1990.
\newblock https://www.jstor.org/stable/2274953.

\end{thebibliography}

\end{document}